\setlist{itemsep=2pt}
\definecolor{darkgreen}{rgb}{0,0.5,0}
\newcommand{\kibitz}[2]{\ifnum\Comments=1{\color{#1}{#2}}\fi}
\newcommand{\bfpar}[1]{%
    \vspace{2pt}%
    \noindent%
    \textbf{#1:}%
}
\newcommand{\ignore}[1]{}
\newtheorem{theorem}{Theorem}[section]
\newtheorem*{theorem*}{Theorem} 
\newtheorem{lemma}[theorem]{Lemma}
\newtheorem{corollary}[theorem]{Corollary}
\newtheorem{proposition}[theorem]{Proposition}
\newtheorem{observation}[theorem]{Observation}
\newtheorem{claim}{Claim}
\newtheorem{definition}[theorem]{Definition}
\newtheorem*{example*}{Example}
\newtheorem*{remark}{Remark}
\DeclareMathOperator{\E}{\mathbb{E}}
\title{Games with Payments between Learning Agents}
\author{%
  Yoav Kolumbus \\
  Cornell University \\
  \texttt{yoav.kolumbus@cornell.edu} 
  \and
  Joseph Y. Halpern \\
  Cornell University \\
  \texttt{halpern@cs.cornell.edu} 
  \and
  \'Eva Tardos \\
  Cornell University \\
  \texttt{eva.tardos@cornell.edu} \\
}
\date{}
\begin{document}
\maketitle

\begin{abstract}
In repeated games, such as auctions, players rely on autonomous learning agents to choose their actions. We study settings in which players have their agents make monetary transfers to other agents during play at their own expense, in order to influence learning dynamics in their favor. Our goal is to understand when players have incentives to use such payments, how payments between agents affect learning outcomes, and what the resulting implications are for welfare and its distribution. We propose a simple game-theoretic model to capture the incentive structure of such scenarios. We find that, quite generally, abstaining from payments is not robust to strategic deviations by users of learning agents: self-interested players benefit from having their agents make payments to other learners. In a broad class of games, such endogenous payments between learning agents lead to higher welfare for all players. In first- and second-price auctions, equilibria of the induced ``payment-policy game'' lead to highly collusive learning outcomes, with low or vanishing revenue for the auctioneer. These results highlight a fundamental challenge for mechanism design, as well as for regulatory policies, in environments where learning agents may interact in the digital ecosystem beyond a mechanism's boundaries.

\end{abstract}


\section{Introduction}\label{sec:intro}
Autonomous learning agents have become widespread on online platforms, playing an increasingly pivotal role in markets and economic ecosystems. A prominent example is the multi-billion-dollar online ad auction industry, estimated at over 1\% of U.S. GDP \cite{marto2024rise}, led by major corporations like  Google, Meta, Amazon, and Microsoft. Due in part to the high frequency of these auctions, automated bidding has emerged as the dominant approach, with bidding traffic managed by various automated agents, provided either directly by the platforms themselves, or by
third parties. Typically, these agents get some high-level instructions from their users about their objectives and allowed action space, and then autonomously interact with other agents in long sequences of repeated games (which could include thousands or millions of auctions per hour), using some learning algorithm to optimize the long-term payoffs for their users.

We are interested in repeated interactions of this sort where players use learning agents to engage in repeated games on their behalf. We take as a point of departure the observation that, in this setting, users may wish to allow their agents to make payments to other agents in order to influence the course of the game dynamics in their favor. 
In this study, we model and analyze the effect of considering such payments between agents on the long-term outcomes of the interaction.

More broadly, the idea that payments and other types of financial interactions outside a mechanism can affect behavior within the mechanism has received considerable attention across a wide range of domains. In blockchain fee markets, for example, this concern has become central to the evolving design of these markets (see \cite{chung2024collusion,roughgarden2021transaction} on off-chain-agreement robust design and \cite{bahrani2023transaction,bahrani2024transaction,daian2020flash} on Maximal Extractable Value (MEV), where algorithmic agents strategically use payments to manipulate market outcomes). Related considerations also arise in fair division \cite{moulin1992application,moulin2004fair}. We defer a more extended discussion of related work to Section~\ref{sec:related-work}. 

In classic strategic settings, such as institutional auctions or voting, collusion and side payments are well-recognized concerns (see, e.g., \cite{hendricks1989collusion,kawai2022detecting} and \cite{harstad2007harmonization}) and are primarily addressed through legal and regulatory frameworks. By contrast, interactions between algorithmic agents on online platforms are not yet well understood and still lack clear regulatory boundaries and effective enforcement tools (see also the references in Section~\ref{sec:conclusion}).

The possibility that learning agents could transfer payments among themselves during their game dynamics raises several basic questions. First, \emph{when do players have incentives to let their agents use payments?} Second, \emph{how do payments between learning agents affect their dynamics?} And third, \emph{what are the long-term implications of payments among agents for  the players' utilities and  social welfare?  (And in the case of auctions, what happens to the seller's revenue?)}  

As in much of the prior literature (e.g., \cite{blum2007learning,daskalakis2016learning,dworczak2020mechanism,foster2016learning,kolumbus2022auctions,zinkevich2003online}), we model agents as no-regret learners; that is, they use learning strategies that 
result in outcomes satisfying the no-regret condition in the long term.
Unfortunately, as is well known, in many games 
no-regret agents can end up playing strategies that yield low utility compared to cooperative outcomes that could potentially have been obtained in the game. 
E.g., in the prisoner’s dilemma, they end up defecting, since that is the dominant strategy. This has led to a great deal of interest in understanding what natural mechanisms may lead agents to get, in some sense, better outcomes. 

Notably, unlike the literature on mechanism design, where a platform or some external entity provides incentives or recommendations aiming to improve welfare \cite{babaioff2022optimal,balcan2009improved,monderer-tenenholtz-2004,SteeringNoRegret2023}, in our case, there is no external entity. All payments are made at the expense of the players, who are solely interested in maximizing their own individual payoffs.

Before giving an example, let us sketch the general structure of the model we have in mind. As mentioned, there is an underlying game, and the model of what we call a \emph{payment policy game} (or just a \emph{payment game}) proceeds in two phases, where each \emph{player} $i$ uses a learning \emph{agent} $i$. In the first phase, each player $i$ chooses a \emph{payment policy} that determines, for each other agent $j$, how much agent $i$ will pay agent $j$ at each round of the agents' dynamics as a function of the outcome (i.e., the joint action profile performed by the agents) in the underlying game in that round. In the second phase, the agents play the underlying game repeatedly, choosing their actions according to their learning algorithm (taking the payments into account). See Section \ref{sec:model} for the full formal description of the model. 

The purpose of using this stylized metagame framework (see references in Section \ref{sec:related-work}) is to isolate the basic incentive structure that arises once learning agents are allowed to make endogenous payments to one another, and to study whether standard conclusions about learning dynamics are robust to this extension. We study equilibria of the resulting payment-policy game under standard no-regret learning dynamics, and show that such equilibria typically involve payments that substantially alter the outcomes of the underlying game.\footnote{A natural question is how equilibria of the metagame might be reached, or which equilibria are selected. One could, for example, model an additional layer of learning over the space of payment policies and analyze specific learning dynamics. We do not pursue this direction, and instead focus on structural properties of the resulting metagames.} In particular, we show that, very generally, when agents achieve low regret, refraining from payments is not a stable behavior. As a result, even weak forms of optimization over payment policies (such as better responses) would induce agents to trade outside the platform.

\bfpar{Warm-up example} With this background, consider the variant of the prisoner's dilemma game 
shown in Figure \ref{fig:PD-game-a}. 
As mentioned above, standard analysis shows that when both players use no-regret learning agents, each agent will quickly learn to
play its dominant strategy, which is to defect (i.e., play strategy $D$); the dynamics will converge to the Nash equilibrium of the stage game where both agents always defect and each player gets an expected payoff of $\nicefrac{1}{3}$.

\begin{figure}[t!]
\centering
    \begin{NiceTabular}{cccc}[cell-space-limits=3pt]
         &     & \Block{1-2}{{\small Player $2$}} \\
         &     &  {\small$C$}     &  {\small$D$} \\
    \Block{2-1}{{\small Player $1$}} 
         &  {\small$C$} & \Block[hvlines]{2-2}{}
               $\nicefrac{2}{3}, \nicefrac{2}{3}$ & {\small$0, 1$} \\
         &  {\small$D$} &  {\small$2, 0$} & $\nicefrac{1}{3}, \nicefrac{1}{3}$ 
    \end{NiceTabular}
\caption{A prisoner's dilemma game.}
\end{figure}
\label{fig:PD-game-a}
 
Now we ask what can player $1$ achieve if she
allows her agent to make payments to the other agent during the game dynamic? 
Consider the payment policy where player $1$'s agent
pays player $2$'s agent  $\nicefrac{1}{3} + \epsilon$
if player $2$'s agent cooperates, for some $\epsilon > 0$, and pays
nothing if player 2's agent defects. 
It is easy to see that in the game with these payments agent $2$ has a dominant strategy, and agent $1$ has a strict best response, leading to the strategy profile $(D,C)$ as the 
unique equilibrium, and so the ensuing dynamics will converge to that outcome 
(see Appendix \ref{sec:appendix-PD-game}), 
with payoffs $(2-\nicefrac{1}{3} - \epsilon, \nicefrac{1}{3} + \epsilon)$ for the two players.

It turns out that this outcome can be obtained as an ($\epsilon$-)equilibrium\footnote{In an $\epsilon$-equilibrium,  no player can increase their payoff by more than $\epsilon$ by altering their strategy.} of the payment game. 
There is also a second similar $\epsilon$-equilibrium of the payment game, obtained by switching the roles of players $1$ and $2$, but due to the asymmetry in the payoffs of the underlying game, the social welfare in this second equilibrium is $1$, not $2$. Importantly, however, both equilibria are Pareto improvements over the Nash equilibrium of the underlying game, and the players' welfare significantly increases compared to the game without payments (note, though, that welfare is not necessarily optimal).

We show, using a similar analysis, that we get these equilibria and improvement in social welfare in general prisoner's dilemma games whenever the maximum welfare gap in the game is large (above a factor of $2$). 
Moreover, while in the game without payments the \emph{price of anarchy} ($PoA$) and \emph{price of stability} ($PoS$)---the ratio between the social welfare obtained from the worst-case (respectively, best-case) Nash equilibrium and the optimal social welfare---are unbounded, with payments, both $PoS$ and $PoA$ are bounded by a factor of $2$ in the symmetric case, and in the asymmetric case $PoS$ is bounded, but $PoA$ is not, due to the welfare gap between the two equilibria of the payment game. See Appendix \ref{sec:appendix-PD-game} for further details.

The point of the example is not the specific payoffs, but the fact that best-response payments change which outcomes are consistent with regret minimization. 
Although prisoner's dilemma is very simple (both players have
strictly dominant strategies), the observations we made regarding this example apply more broadly. In a wide range of games, \emph{selfish players benefit from using payments between their agents}, and the use of payments
can often enable players to reach more cooperative outcomes, with
higher social welfare.  

\vspace{3pt}
\bfpar{Our contribution} 
We study the incentives that players have to use payments with their learning agents and the potential impact of such payments among learners on the joint  dynamics and outcomes.

From a high-level perspective, our results highlight a challenge for auction design and market design more broadly: we show that users of learning agents quite generally have inherent incentives to allow their agents to make payments during their interactions. When these incentives are coupled with the right technology---particularly, sophisticated AI agents and flexible transaction media, such as those on blockchain platforms---one can easily imagine an outlook  where agents trade among themselves ``under the hood,'' and the associated markets change their behavior, equilibria, and outcomes. Our analysis takes a step toward understanding these interactions. The  
results demonstrate that these changes can be very significant, underlining, on the one hand, a potential improvement in efficiency, but on the other hand, a risk, and the need to better understand the impact on concrete systems, to 
be able to design them accordingly. 

We propose a simple model of these interactions for general games in Section \ref{sec:model}. As our main case study, we analyze first- and second-price auctions in Section \ref{sec:auctions}. In the following sections, we study general properties of payment games: Section~\ref{sec:single-player-manipulations} focuses on the instability of not using payments, by analyzing the incentives players have to initiate unilateral payments, and Section~\ref{sec:two-player-games} focuses on the special case of two-player games, where stronger cooperative learning outcomes emerge due to payments. Our results in all these settings provide comparative statics between the outcomes of learning dynamics without payments and those obtained with learning agents in equilibria of the payment game between their users. 
To conclude the introduction, we next give a brief informal summary of the main theorems. 

In second-price auctions, Theorem \ref{thm:second-price-zero-revenue} shows that players have incentives to provide payment policies to their agents (i.e., using learning agents in the auction without using payments is not a Nash equilibrium), and that in an equilibrium of the payment game, in the long run, the players may reach full collusion, where they capture almost the entire welfare, leaving the auctioneer with vanishing revenue as $T \rightarrow \infty$. This result applies for any number of players with generic (i.e., not exactly equal) valuations. 

In first-price auctions, Theorem \ref{thm:first-price-cooperative-outcomes} shows that here as well, players have incentives to use payments. 
So using automated bidders without using payments is not an equilibrium. 
This result applies for any number of players using mean-based no-regret algorithms (see Appendix \ref{sec:appendix-FP-auction-n-player-NE} for further details). For the special case of two players, the theorem also shows that equilibria of the payment game lead to low (but in this case, still positive) revenue.

In Section~\ref{sec:single-player-manipulations}, we show that under broad conditions, players have incentives to deviate from zero payments and use payments with their learning agents, also beyond auctions. Theorems~\ref{thm:single-player-optimal-welfare} and~\ref{thm:single-player-stability} show that in a wide range of finite games,\footnote{Finite games are games with bounded payoffs and, unlike auctions, finite action sets.} there is at least one player who has an incentive to use payments, and we characterize cases where a payment policy applied by a single agent can drive the dynamics toward the optimal-welfare outcome while also improving the payoff for its user.

Theorems \ref{thm:two-players-Pareto-improvement} and \ref{thm:two-players-DS-games} deal with the case of two-player games. Bilateral interactions have special  properties in payment games, intuitively, stemming from the fact that any payment must be desirable to both players, as there are no other externalities on the associated agents from interactions with third parties. Theorem \ref{thm:two-players-Pareto-improvement} shows that in two-player games, payments always result in Pareto improvements compared to the game without payments. Furthermore, Theorem \ref{thm:two-players-DS-games} demonstrates that in a broad class of games --- extending even to games where  players have strictly dominant strategies --- at least one player will strictly benefit from using payments. Consequently, not using payments is not a stable behavior.

Taken together, these results suggest a clear big picture: scenarios in which automated learning agents, each optimizing its own objective, do not use payments are strategically unstable. Very broadly, learning agents (and hence their users) benefit from using payments at their own expense to influence learning dynamics. Equilibria of payment policies, or even unilateral deviations from zero payments, can substantially impact learning dynamics. These incentives to introduce payments into learning dynamics arise naturally, and the resulting payoffs, revenue, and social welfare can differ markedly from outcomes induced by learning dynamics without payments.

\section{Model}\label{sec:model}
We consider scenarios where we have players using 
learning agents in some repeated game, and study the setting where players can augment their agents by allowing them to make payments to other agents based on 
the actions of these agents during the game dynamics.  
The learning agents are assumed to satisfy the regret-minimization property (see Appendix \ref{sec:appendix-definitions} for standard definitions), 
but not restricted to using any particular algorithm, and the players themselves are interested in their own
long-term payoffs. 
As our analysis relies only on the long term learning outcomes of the agents in the limit $T \rightarrow \infty$, where $T$ is the number of game rounds, our results apply regardless of the specific regret bounds or details of the algorithms used, assuming that the agents achieve regret sublinear in $T$. The model we consider consists of the following components. 
\begin{itemize}[leftmargin=*, itemsep=3pt]
    \item \textbf{The underlying game}: There is an underlying game $\Gamma = \{[n],S,\{u_i\}_{i=1}^n\}$, where $[n]$ is the set of players, $S$ is the space of joint actions, and $u_i:S \rightarrow [0,1]$, $i = 1, \dots, n$ are the utility functions.      
      
    \item \textbf{Agents}: Every player uses a no-regret agent to play the game on their behalf for $T$ rounds. Denote the set of these agents by $A$, where $A_i \in A$ is the agent of player $i\in [n]$. 
    Agents choose actions using their no-regret algorithms. The action of agent $A_i$ at time $t \in [T]$ is denoted $s_i^t$ and the action profile of all the agents is denoted $s^t$.  
      
    \item \textbf{Payments}: In addition to choosing actions, the agents make payments to other agents that depend on the actions $s^t$ chosen by the agents, according to payment policies defined by the associated players (the agents' users), as specified next.
            
    \item \textbf{Payment policies}: Each player $i$ chooses a policy for her own agent that determines, for each action profile $s \in S$ and player $j \in[n]\setminus\{i\}$, the payment $p_{ij}(s) \in \mathbb{R}_+$ from agent $i$ to agent $j$ when the outcome in the most recent step is $s$. We assume that payments are bounded: $p_{ij}(s) \in [0,M]$, where $M > n$ is an arbitrary large constant.  
    
    \item \textbf{Agent utilities}: 
    In every step $t$, the agents play an action profile $s^t \in S$ and observe the realized outcome and payments. The agent-utility of agent $i$ at time $t$ 
    is $v_i^t = u_i(s^t) + \sum_{j \neq i} \big(p_{ji}(s^t) - p_{ij}(s^t)\big)$, that is, the utility from the underlying game plus the net payment.
      
    \item \textbf{Players' utilities}: The players' utilities are their long-term average payoffs: $U_i^T = \E[\frac{1}{T} \sum_{t=1}^T v_i^t]$; we are particularly interested in the limit $\lim_{T\rightarrow \infty}U_i^T$, denoted by $U_i$.      
    If a player is asymptotically indifferent as $T \rightarrow \infty$ between making a payment and not making the payment, the player prefers not to make the payment due to (vanishing) transient costs.   
\end{itemize}

Note that the payment policies remain fixed for periods sufficiently long that agents achieve no-regret outcomes. The goal of using this metagame abstraction is to isolate the underlying incentive structure of the resulting interaction induced by payments, and it can be viewed as a separation of time scales between changes in payment policies and play in the underlying game (e.g., bids in online auctions), similar in spirit to~\cite{aggarwal2024randomized,feng2024strategic,KolumbusNisan2021manipulate,kolumbus2022auctions}.

The model above defines a \emph{payment-policy game} (in short, a \emph{payment game}) between users of learning agents, 
where an instance of a payment game is specified by $\mathcal{G} = \{\Gamma, A, T \}$,  
and the players' actions in the payment game are to choose their payment policies $p_{ij}(s)$ for each $s \in S$, as defined above. 
To understand the basic properties of
payment games and how the incentives of players change compared
to games without payments, our analysis of equilibria and potential
deviations focuses on one-shot, full-information payment-policy games.
A fundamental question that arises is: \emph{when would players have incentives to use payment policies?}
Conversely, what are the conditions for a game to \emph{``stable''}, in the sense that players do not have incentives to use such manipulations?

\vspace{5pt}
\begin{definition}\label{def:stable-games}
    A game $\Gamma$ is \emph{``stable for a set $A$ of learning agents''}
    if zero payments are an equilibrium of the payment-policy game $\mathcal{G}$ associated with $\Gamma,A$, for sufficiently large $T$. A game $\Gamma$ is \emph{``stable''} if it is stable for any set of regret-minimizing
    agents for the players.  
\end{definition}

\section{Further Related Work}\label{sec:related-work}
\textbf{Learning in games:} This paper follows a long research tradition on learning and dynamics in games, from early work of Brown, Robinson, Blackwell, and Hannan in the 1950s \cite{blackwell1956analog, brown1951iterative,hannan1957lapproximation,robinson1951iterative}, through  seminal work in the following decades \cite{foster1997calibrated,fudenberg1995consistency,hart2000simple} and continuously since \cite{abel2025learning,easley2025markets,ROI-no-regret-autobidding-Aggarwal-Fikioris-Zhao-2024,blum2007external,daskalakis2021near,daskalakis2018last,fikioris2023liquid,
kalai2005efficient,kolumbus2024asynchronous,kolumbus2022auctions,KolumbusN22,kumar2024strategically,milionis2023impossibility,syrgkanis2015fast}. See \cite{cesa2006prediction,hart2013simple} for an overview of the foundations of the field. The notion of regret has been central throughout this work as a tool to define learning outcomes and objectives and design algorithms to achieve them.

\bfpar{Regret minimization in auctions} No-regret learning has been widely studied in auctions, with two prominent lines of work focusing on
the price of anarchy in various auction formats with regret-minimizing bidders \cite{blum2008regret,caragiannis2015bounding,daskalakis2016learning,
roughgarden2012price,roughgarden2017price,syrgkanis2013composable} and on the 
problem of inferring bidder preferences, where such bidder behavior is either assumed or empirically verified  
\cite{gentry2018structural,nekipelov2015econometrics,noti2017empirical,nisan2017quantal}. Other work has studied automated bidding algorithms for auctions with more complex features, such as budget constraints \cite{balseiro2019learning,deng2021autoBidding,fikioris2023liquid,fikioris2024learning,lucier2024autobidders}, or learning reserve prices \cite{cesa2014regret,mohri2014optimal,roughgarden2019minimizing}. In  \cite{alaei2019response,noti2021bid}, regret minimization is used as a prediction model for bidder behavior. Importantly, \cite{nekipelov2015econometrics,noti2021bid} provide empirical evidence from large auction datasets showing that bidder behavior in practice is, by and large, consistent with no-regret learning outcomes. 
Learning dynamics in first- and second-price auctions and their convergence properties have been studied in a broad range of work  \cite{bichler2023convergence,borgs2007GFPdynamics,daskalakis2016learning,deng2022nash,feng2020convergence,fikioris2023liquid,fikioris2024learning,kolumbus2022auctions,KolumbusN22,lucier2024autobidders}. 

\bfpar{Metagames}
A recent line of work 
studies metagames between users of learning agents \cite{kolumbus2022auctions,KolumbusN22}, in which players set parameters for their agents, which then repeatedly interact in some underlying game (e.g., an auction, a market, or an abstract normal-form game). The utilities in the metagame are determined by the long-term outcomes of the agents' dynamics. These papers focus on analyzing the average outcomes of regret-minimizing agents and their implications on users' incentives to misreport their preferences to their own agents. Further research has explored metagames in randomized auctions \cite{aggarwal2024randomized} as well as strategic budget selection in auctions \cite{alimohammadi2023incentive,feng2024strategic,mehta2023auctions} and its implications to welfare and revenue.

Our analysis is closely related to the metagame framework but focuses on a distinct concern and type of interaction that has not been analyzed for autonomous learners: endogenous payments between agents.
We show that payments have a markedly different effect on the agents' dynamics, leading to different analyses and results in auctions and other games compared to the strategic parameter reports studied in prior work. For example, in both first- and second-price auctions, equilibrium bids and revenues are lower in our payment-based scenario than in prior metagame results \cite{kolumbus2022auctions,aggarwal2024randomized}.

\bfpar{Optimization against learning agents}
A different line of research on interactions with learners considers scenarios where an optimizer interacts with a learning agent, such as an auctioneer optimizing auction rules against a learning buyer \cite{braverman2018selling}. This has been extended to studying cases with multiple buyers \cite{cai2023selling}, general two-player games \cite{arunachaleswaran2024pareto,deng2019strategizing}, Bayesian games \cite{MansourMohriSchneiderSivan2022strategizing}, and repeated contracting \cite{guruganesh2024contracting}.  
While that line of work shares the premise of players committing to using learning algorithms, its focus is different: it does not study payments between the agents, and furthermore, it addresses an {\em optimization problem} where a single player solves for the best response to a fixed learning algorithm, while we study a {\em game} (where none of the players is a Stackelberg leader) and the equilibrium outcomes that arise from the long-term dynamics of multiple learning agents.      

\bfpar{Algorithmic Collusion} A growing body of work studies algorithmic collusion and its relationship to no-regret learning in repeated games. Hartline et al.~\cite{hartline2024regulation} relate notions of non-collusion to swap-regret guarantees, while Arunachaleswaran et al.~\cite{arunachaleswaran2025algorithmic} show that the relationship between no-regret learning and collusion is subtle: some subclasses of no-regret algorithms admit collusive or monopolistic outcomes, while others preclude them. In closely related work, Arunachaleswaran et al.~\cite{arunachaleswaran2025swap} study robustness properties of learning dynamics, characterizing notions of non-manipulability in general games. Beyond no-regret analysis,~\cite{calvano2020artificial} study collusion in pricing games with simple Q-learners, and~\cite{fish2024algorithmic} demonstrate algorithmic collusion in extensive experiments with LLM agents. 
The present work is complementary and orthogonal to this literature. Monetary transfers among agents fundamentally change the payoff structure, and thus affect learning dynamics through a different channel than those considered in the above works; as a result, existing notions of non-manipulability or collusion robustness do not suffice.

\bfpar{Cooperation in repeated games} The question of how distributed and self-interested players can achieve cooperative or efficient outcomes is a broad topic, studied across various, often separate, fields. A classic approach addresses this challenge through mechanism design \cite{myerson1989mechanism,nisan1999algorithmic} and implementation theory \cite{jackson2001crash,maskin2002implementation}. Somewhat closer to our work are papers that study how an external mechanism designer can minimize the exogenous payments she needs to make to the players in order to implement particular outcomes in the game. 
See, e.g., \cite{babaioff2022optimal,monderer-tenenholtz-2004}, and \cite{SteeringNoRegret2023}, where a designer adds exogenous payments to learning agents in order to steer their convergence to a desired outcome. By contrast, we study the impact of payments that automated agents could make among themselves; our analysis does not consider a mechanism-design problem and there are no external funds or a directing hand. 
An additional strand of work from the reinforcement learning (RL) and artificial intelligence literature studies how distributed RL algorithms can jointly reach cooperative outcomes in sequential social dilemmas \cite{austerweil2016other,eccles2019learning,jaques2019social,leibo2017multi,yang2020learning}. A third line of work considers the notion of program equilibrium 
\cite{lavictoire2014program,
tennenholtz2004program}, in which each player declares a program and each program can read the commitments made in other programs and condition actions in the game on those commitments. 
This model leads to a broad range of equilibria, including cooperative ones, but differs significantly from our setting.

In comparison to research on collaborative outcomes from the perspective of cooperative game theory \cite{moulin1995cooperative}, we note that while in our payment game the players do manage to share, to some extent, the welfare they obtain, the game is not cooperative. In particular, an equilibrium of the payment game is not necessarily an element of perhaps the most standard solution concept in cooperative games, the core \cite{gillies1959solutions}. This can be clearly seen, for instance, in our analysis of first-price auctions, where in equilibrium the bidders do not play an action profile in the core of their game (thinking of the auction mechanism as set, and the game as taking place only between the bidders).

\bfpar{Games with monetary transfers} 
The concept of using payments to incentivize desired behaviors in games is a well-established idea. Payments between players have been studied in the context of social choice \cite{harstad2007harmonization} and fair division \cite{moulin1992application,moulin2004fair}, and are a central concept in the broad literature on contract theory and principal-agent problems. For an introductory overview on this literature, we refer interested readers to \cite{bolton2004contract,duettingcontract,laffont1981theory}.

Most work in this area involves a ``principal'' who provides incentives to agents, thereby facing a contract-design or optimization problem rather than acting as a strategic player in the game. Notable papers on contracts more closely related include \cite{jackson2005endogenous}, which studies multi-lateral payment contracts in one-shot interactions; \cite{guruganesh2024contracting,zhu2022sample}, which study learning in repeated contracts, \cite{ben2024principal} in MDP contracting settings, and \cite{ramirez2023game}, which explores scenarios where a player extracts fees from other players by using binding contracts to alter the equilibrium of the game. 
Our model however, diverges significantly from the standard contract framework.
Instead of focusing on principal-agent dynamics or external incentives injected into the game, we analyze scenarios where players choose to augment their learning agents with the capacity to make payments. Our primary interest lies in when such payments are beneficial and how they influence the agents' learning dynamics and, consequently, the outcomes of the games they play. Other work on games with payments related to ours include \cite{dutta2019asynchronous} and \cite{kosenko5088495efficiency}, who study sequential games with transfers in which players make sequential commitments, responding rationally to the commitments made by previous players. They study conditions where the sequential commitment structure leads to a unique subgame perfect equilibrium with full welfare.  These works do not deal with learning agents and their setting and analysis differ significantly from ours.

\section{Auctions with Payments between Learning Agents}\label{sec:auctions}

To gain a deeper understanding of the incentives of users in online platforms to use payments with their learning agents and the potential impact of such payments, we focus on one class of games that has already been widely analyzed: auctions with automated bidders. 
We consider the simplest model of first- and second-price auctions, 
where at each time $t \in [T]$ a single identical item  is sold. Each
player $i$ has a value $v_i$ for the item, and we index the players in
decreasing order of their values, $v_1 \geq v_2 \geq \dots \geq v_n$. At round $t$, each agent submits a bid $b_i^t$ and the
auction mechanism determines the identity of the winner and a price
$p^t$. The payoff for the winning agent from the auction is $v_i -
p^t$; the other agents get  payoff zero. 

In addition to payoffs from the auction, agents
get a payoff according to the payments between the agents (as defined in Section \ref{sec:model}). 
Utilities are additive over game rounds. In the first-price auction, the payment is equal to the highest bid; 
in the second-price auction, the payment is the second-highest bid. In both cases, the highest bidder wins, and we assume tie-breaking according to the index of the agent (as is done, e.g., in \cite{feldman2016correlated}), 
so that if $i < j$ and $i$ and $j$ both have the same top bid, then $i$ is taken to be the winner.  
All our analyses are similar for arbitrary tie-breaking rules, but breaking ties in favor of the highest-valued agent simplifies the presentation.

\subsection{Second-price auctions}
The second-price auction has been widely studied \cite{vickrey1961counterspeculation} (see Section \ref{sec:related-work} for further references). 
While the auction is known to be incentive-compatible (or ``truthful''), it is also known that truthful bidding forms only one out of many Nash equilibria of the stage game. In the context of learning dynamics, it has been shown that regret-minimizing agents in this auction do not converge to
the truthful equilibrium; instead, they reach some degree of tacit collusion \cite{kolumbus2022auctions} with lower revenue to the
auctioneer than that obtained under truthful bids.  

Specifically, simplifying slightly, second-price auctions have two types of
non-truthful equilibria. 
The first type is 
``overbidding equilibria,'' 
where the high-valued player bids 
anything above her value
and the low-value player 
bids anything below the high value. The second type is ``low-revenue equilibria,'' where the
high-valued player bids anything above the low value and the low-value
player bids anything below the low value.\footnote{Both types of equilibria arise for any number of players $n$; we describe the two-player case for ease of exposition.} 
The latter type is especially interesting for our setting since any mixture of such
equilibria is consistent with regret minimization, but yields high welfare to the players. 
This suggests that when 
learning agents exchange payments, 
the higher-valued agent may be able to influence the dynamics with the other agents to induce 
a better equilibrium for herself.

We show that, indeed, when agents ``trade outside of the mechanism'' in this way during their learning dynamics, this kind of collusion emerges from the agents' interaction in a strong way, 
potentially leading to zero revenue for the auctioneer. Notably, these outcomes can be supported by simple payment policies. 

\begin{theorem}\label{thm:second-price-zero-revenue}
    The single-item second-price auction with $n$ players, where the values of the  players with the two top values are $v_1 > v_2$, is not stable for any regret-minimizing agents for the players. Furthermore, the payment-policy game of this auction has an $\epsilon$-equilibrium where the players capture the full welfare, and the auctioneer gets zero revenue.
\end{theorem}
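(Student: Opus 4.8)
**The plan is to prove two claims: (i) instability (zero payments is not an equilibrium), and (ii) existence of a zero-revenue, full-welfare $\epsilon$-equilibrium.** For instability, I would start from the characterization of no-regret learning outcomes recalled in the text: without payments, regret-minimizing agents in the second-price auction need not converge to the truthful equilibrium, and in particular the set of no-regret outcomes includes the ``low-revenue equilibria'' where the high-valued player bids above $v_2$ and the other players bid below $v_2$. The key observation is that whatever revenue the auctioneer collects under zero payments is money leaving the players' pockets, so there is slack for a player to do better by paying a competitor to lower its bid (or to stay out) rather than paying that revenue to the auctioneer. Concretely, I would exhibit a profitable deviation for player $1$: design a payment policy that strictly increases $U_1$ relative to the best no-regret outcome attainable under zero payments, which establishes that zero payments fail the equilibrium condition for large $T$.

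**For the constructive part, the heart of the argument is to design payment policies that make a zero-price, full-welfare outcome the unique equilibrium of the stage game faced by the agents, so that regret minimization forces convergence there.** I would have player $1$ (the highest value) commit to paying each other agent $j$ a small bonus, say $\epsilon/n$, conditioned on the action profile in which player $1$ wins at price $0$ — i.e., agent $1$ bids high and every other agent bids $0$ (or below some tiny threshold). The design goal is that, taking these payments into account, each agent $j \neq 1$ has bidding $0$ as a (weakly, then strictly after tie-breaking/perturbation) dominant or uniquely-rationalizable response, while agent $1$'s best response is to win at the resulting price of $0$. Because every other agent is compensated for not competing, nobody regrets bidding low, so the induced agent-game has the collusive outcome as its unique no-regret attractor. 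The auctioneer's price is the second-highest bid, which is $0$, giving zero revenue; player $1$ captures $v_1$ minus the total side payments $\epsilon$, and the remaining players collectively receive $\epsilon$, so the players capture welfare $v_1 - o(1)$, which is the full available welfare as $\epsilon \to 0$.

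**The main obstacle, and the step requiring the most care, is arguing that this profile is actually an $\epsilon$-equilibrium of the payment-policy game — not merely a desirable outcome — which requires checking that no player wants to revise its payment policy.** Here I would verify that no agent $j \neq 1$ can profitably deviate its bids given the posted payments (deviating to bid high and win costs more than the bonus it forgoes, provided $\epsilon$ is chosen against the valuation gap $v_1 - v_2$), and, crucially, that the \emph{players} setting the policies have no profitable policy deviation. The delicate case is player $1$: could she reduce her side payments and still induce collusion? The payments must be large enough that each opponent strictly prefers to comply, so they cannot be unilaterally shrunk without breaking the equilibrium; this is exactly where the $\epsilon$ slack and the asymptotic tie-breaking convention (``indifferent players prefer not to pay'') are used, and where I would lean on the model's stipulation that policies are fixed long enough for the no-regret outcome to be realized. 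I would also confirm that opponents cannot extract more by threatening to compete — the bounded-payment assumption $M > n$ and the genericity condition $v_1 > v_2$ guarantee the bonus needed is affordable and the ordering is strict. Assembling these checks, together with taking $T \to \infty$ so that transient regret and transient payment costs vanish, yields the claimed $\epsilon$-equilibrium with zero revenue and full welfare capture.
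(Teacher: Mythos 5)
Your construction is missing the key device on which the paper's proof rests: a \emph{self-commitment} clause in player $1$'s payment policy. In the paper, player $1$'s policy has two parts: (1) when $b_1 = v_1$, pay each other agent $\epsilon/(n-1)$ for bidding zero, and (2) when $b_1 \neq v_1$, pay the maximal amount $M/(n-1)$ to \emph{every} other agent. Clause (2) makes bidding $v_1$ strictly dominant for agent $1$, which pins the agents' game to a unique outcome (strict dominance for agent $1$, then strict best response of $0$ for everyone else) that any no-regret dynamic must reach; and it simultaneously deters the other \emph{players} from deviating, because making agent $1$ willing to bid anything other than $v_1$ now requires transferring at least $M - v_1$ per round, which is unprofitable since $M > 2nv_1$. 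Your bonus-only scheme, conditioned on the profile ``agent $1$ bids high and all others bid $0$,'' fails on both counts. At the agent level: given that the others bid $0$, agent $1$'s learner earns $v_1 - \epsilon(n-1)/n$ by bidding ``high'' but earns $v_1$ by placing some other positive (still winning) bid, since she then avoids the bonus payments; so the collusive profile is not even a Nash equilibrium of the induced stage game, and agent $1$'s own regret minimization unravels it. At the player level: nothing in your profile blocks a bribing deviation. Player $2$ can deviate to a policy that commits her own agent (via large self-punishments, which any player may write into a policy) to bid $v_1 - \delta$ and pays agent $1$ a bribe $\beta \in (\delta, 2\delta)$ for bidding $0$; agent $1$'s surplus from winning is then only $\delta < \beta$, so she stands down, the other agents have no incentive to compete, and player $2$ wins at price $\approx 0$, netting $\approx v_2$ instead of $\epsilon/n$. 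This destroys the claimed $\epsilon$-equilibrium by a constant margin. Your checks (``deviating to bid high and win costs more than the bonus,'' affordability via $M > n$) address bid deviations by agents, not this policy deviation by players, which is precisely the case the paper's Claim about zero payments being a best response for $j > 1$ is designed to handle --- and it can only be handled because of the $M$-penalty clause.

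A secondary flaw is in your instability argument: you propose to exhibit a policy that ``strictly increases $U_1$ relative to the best no-regret outcome attainable under zero payments.'' That is impossible as stated, since the zero-payment outcome in which agent $1$ always wins at price $0$ is itself consistent with no regret and already gives player $1$ the maximum $v_1$. Instability must be argued against an arbitrary fixed set of no-regret agents: if their zero-payment outcome gives player $1$ noticeably less than $v_1$, her deviation to the commitment-based policy is strictly profitable; if instead it gives her nearly $v_1$, the other players get nearly nothing and one of them has a strictly profitable bribing deviation of the kind sketched above. Either way some player deviates, but this requires the case split --- and, again, the commitment clause your construction lacks.
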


The idea of the proof is, first, to construct  a
unilateral payment policy for player $1$, parameterized by a single
parameter $\epsilon$ that determines the amount that agent $1$ pays to the other agents, such that this policy induces a dynamic where  player $1$'s (i.e., the high-valued player's) agent bids its true value and all the other learning agents bid zero. We then show that for every other player, using the zero-payment policy is in fact a best response, and that the (net) payoff for the high-valued player can be
arbitrarily close to the full welfare, so player $1$ cannot improve her payoff by more than $\epsilon$ for any $\epsilon > 0$. (Note that with the second-price payment rule, if agents were not restricted to using finite payments,  players would be locked into a game without an equilibrium, where each player tries to commit to making the highest bid and incentivize the others to bid zero.)

\begin{proof} (Theorem \ref{thm:second-price-zero-revenue}). 
Consider a repeated second-price auction with a single item sold in every step, and $n$ players with values $v_1 > v_2 \geq \dots \geq v_n$. Assume that the total payment that a player makes is bounded by some large constant $M >  2 n v_1$.     
Consider the following profile of payment policies. Every player $j > 1$ uses the policy of always paying zero. Choose $\epsilon$ with $0 < \epsilon < (v_1 - v_2)/n$. The payment policy for player $1$ is specified by the following conditions. (1) Whenever $b_1 = v_1$, make a payment $p_{1j}$ to every player $j > 1$ of $\epsilon/(n-1)$ if $b_j = 0$ and zero otherwise. (2) If agent $1$ bids $b_1 \neq v_1$, then she pays {\small $p_{1j} =  \frac{M}{n-1}$} to every other player $j > 1$. Note that the maximum payment according to these conditions is $M$, as required. We continue with the following claim.
    
\begin{claim}
    With these payments, bidding $v_1$ is a strictly dominant strategy for agent $1$.
\end{claim}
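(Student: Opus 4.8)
The plan is to show that for \emph{every} fixed profile $b_{-1}$ of the other agents' bids and every alternative bid $b_1 \neq v_1$, agent $1$'s augmented utility $U_1(b_1,b_{-1})$ (auction payoff minus total payments) is strictly larger when she bids $v_1$. I would do this by lower-bounding $U_1$ in the truthful case and upper-bounding it in every deviation, and then checking that the large penalty built into condition~(2) of the policy opens a gap between the two bounds. No case analysis over how the other agents bid is really needed beyond tracking one quantity.

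First I would lower-bound the payoff from bidding $v_1$. Writing $b^\ast := \max_{j>1} b_j$ for the highest competing bid, the second-price rule (with ties broken toward the index-$1$ agent) means that bidding $v_1$ wins exactly when $v_1 \geq b^\ast$, in which case the price is $b^\ast$ and the auction payoff is $v_1 - b^\ast \geq 0$; otherwise agent $1$ loses and collects $0$. Hence the auction payoff from bidding $v_1$ is \emph{always} nonnegative. Under condition~(1) the only transfers agent $1$ makes are $\tfrac{\epsilon}{n-1}$ to each competitor currently bidding $0$, a total lying in $[0,\epsilon]$. Therefore $U_1(v_1,b_{-1}) \geq -\epsilon$ for every $b_{-1}$. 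Next I would upper-bound any deviation $b_1 \neq v_1$: the auction payoff is at most $v_1$ (the surplus $v_1$ minus a nonnegative price, or $0$ on a loss), while condition~(2) forces a payment of $\tfrac{M}{n-1}$ to each of the $n-1$ competitors, i.e. exactly $M$ in total, so $U_1(b_1,b_{-1}) \leq v_1 - M$.

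Combining the two bounds and using $M > 2nv_1 \geq 2v_1 > v_1 + \epsilon$ --- where the last step uses $n \geq 2$ and $\epsilon < (v_1-v_2)/n < v_1$ --- gives $v_1 - M < -\epsilon \leq U_1(v_1,b_{-1})$, hence $U_1(v_1,b_{-1}) > U_1(b_1,b_{-1})$ for all $b_{-1}$ and all $b_1 \neq v_1$, which is exactly strict dominance of $v_1$. There is no deep obstacle here; the only thing to be careful about is making the two bounds hold \emph{uniformly} over all competing profiles, including adversarial ones in which $b^\ast$ is very large or in which bidding $v_1$ loses the item. The single fact that makes this uniform is the classical property that truthful bidding in a second-price auction never yields negative auction utility, so the sole cost of bidding $v_1$ is the bounded transfer of at most $\epsilon$, whereas \emph{any} deviation triggers the flat penalty $M$, which was deliberately chosen to exceed the maximum conceivable auction gain $v_1$ plus $\epsilon$.
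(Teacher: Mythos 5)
Your proof is correct and establishes exactly what the claim asserts: for every competing profile $b_{-1}$ and every $b_1 \neq v_1$, agent $1$'s total utility (auction payoff minus transfers) from bidding $v_1$ strictly exceeds that from bidding $b_1$. The organization differs from the paper's. The paper fixes the top competing bid $b_0$ and runs an explicit case analysis ($b_0 \leq v_1$ versus $b_0 > v_1$, and within each case winning versus losing deviations), computing the exact utility of each option and comparing term by term ($v_1 - b_0 - p$ versus $v_1 - b_0 - M$ and $-M$ in the first case; $-p$ versus $v_1 - b_0 - M$ and $-M$ in the second). You instead sandwich the two sides with a pair of \emph{uniform} bounds: truthful bidding yields auction payoff at least $0$ (the classical pointwise incentive-compatibility property of the second-price rule, valid for every $b_{-1}$ under the paper's tie-breaking) minus at most $\epsilon$ in condition-(1) transfers, while any deviation yields auction payoff at most $v_1$ minus the flat penalty $M$ from condition (2); the chain $M > 2nv_1 \geq 2v_1 > v_1 + \epsilon$ (using $\epsilon < (v_1 - v_2)/n < v_1$) then closes the gap once and for all. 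Both arguments hinge on the same mechanism --- the penalty $M$ swamps any conceivable auction gain --- but your bounding formulation eliminates the case analysis entirely and is robust to incidental details such as the tie-breaking rule or arbitrarily large competing bids, whereas the paper's version makes the exact utilities in each configuration explicit, at the cost of enumerating the configurations.
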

\begin{proof}
    To see this, fix a bid profile for the other agents $2,\dots,n$, let $b_0$ denote the maximum bid of the other agents, and let $p$ denote the total payment to the other agents due to condition (1). We have the following cases.

    \vspace{3pt}
    \noindent
    Case 1 $b_0 \leq v_1$: In this case, the bid $v_1$ gives agent $1$ a utility of $v_1 - b_0 - p$. Every other bid $b \neq v_1$ and $b \geq b_0$ gives a utility of $v_1 - b_0 - M$, which is strictly less. Every losing bid $b < b_0$ gives a utility of $- M$, which is also strictly less. 

    \vspace{3pt}
    \noindent
    Case 2 $b_0 > v_1$ (if agents overbid): Here every winning bid for agent $1$ gives a utility of $v_1 - (b_0 + M)$, every losing bid not equal to $v_1$ gives a utility of $- M$, and bidding exactly $v_1$ gives a utility of $-p$, which is strictly higher. Therefore, bidding $v_1$ is a strictly dominant strategy, and agent $1$ who is minimizing regret will learn to only use this strategy.
    \end{proof}
  
    Next, we claim that since agent $1$ bids only $v_1$, every other agent has a strict best response, which is to bid zero. This is clear, since for every agent $j > 1$, every winning bid gives negative utility, every losing bid $b \leq v_1$ and $b \neq 0$ gives zero utility, and bidding exactly zero gives positive utility. Thus, with these payment policies, due to the regret-minimization property, agent $1$ always bids $v_1$ and wins the auction, while the other agents bid zero and get a payment of $\epsilon/(n-1)$ each. The utility for player $1$ in the long term is $v_1 - \epsilon$, and the utility for every other player is $\epsilon/(n-1)$. 
       
\begin{claim}\label{thm:claim-second-price-BR-with-overbidding}
The policy of always paying zero is a best response for every player $j > 1$.
\end{claim}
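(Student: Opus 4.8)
The plan is to show that, given player $1$'s fixed payment policy (which we have already argued induces agent $1$ to bid $v_1$ always), no player $j > 1$ can do strictly better than the zero-payment policy, up to a vanishing term. First I would fix an arbitrary player $j > 1$ and consider an arbitrary deviating payment policy for $j$, while all other players $k \neq 1, j$ continue to pay zero and player $1$ keeps its prescribed policy. The key structural observation is that player $1$'s dominant strategy to bid $v_1$ was established without any reference to the payments made by players $j > 1$: those payments enter agent $1$'s utility only as incoming transfers $p_{j1}$, which are nonnegative and do not depend on agent $1$'s own bid in a way that could overturn the case analysis in the preceding claim (incoming payments only raise agent $1$'s utility uniformly or in a manner that still leaves bidding $v_1$ strictly optimal, since condition (2) imposes a penalty of $M > 2nv_1$ that dominates any bounded transfer). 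So I would first argue that agent $1$ still bids $v_1$ regardless of $j$'s policy.

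Given that agent $1$ bids $v_1$ and wins every round, I would then compute the best-possible long-run payoff available to player $j$. Player $j$'s agent loses every auction (since bidding above $v_1$ to win yields negative auction utility net of the price $v_1$, and any deviation by agent $1$ away from $v_1$ has been ruled out), so $j$'s payoff from the underlying auction is zero in the limit. The only channels through which $j$'s policy affects $j$'s payoff are: the incoming payment $\epsilon/(n-1)$ from player $1$, which $j$ receives precisely when $b_j = 0$ per condition (1); and the outgoing payments $j$ chooses to make to others, which only decrease $j$'s utility. Therefore I would show that any policy in which agent $j$ does not bid zero forfeits the incoming $\epsilon/(n-1)$ subsidy, and any policy with strictly positive outgoing payments is strictly worse, so the zero-payment policy — under which $j$'s agent learns to bid zero and collect $\epsilon/(n-1)$ — is a best response.

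The one subtlety I would address carefully is whether player $j$ could use outgoing payments to \emph{change the learning dynamics} in a way that benefits $j$ — for instance, by paying some other agent $k$ to bid high and thereby altering who wins or what price is paid. Here I would invoke the fact that agent $1$ bids $v_1$ no matter what (the penalty $M$ is large enough that $1$'s dominant strategy is robust to any bounded payments from others), so $j$ cannot dislodge agent $1$ from winning at price equal to the second-highest bid; any attempt to induce a competitor to bid up only raises the price $1$ pays (which does not help $j$) while costing $j$ the bribe. Thus no profitable manipulation of the dynamics exists for $j$. I expect this step — ruling out indirect manipulations through payments to third parties — to be the main obstacle, since it requires arguing that the induced game for the remaining agents has no configuration that player $j$ would prefer to finance; the argument rests on the robustness of agent $1$'s dominant strategy and on the fact that $j$'s only positive revenue source is the fixed subsidy from player $1$, which is maximized exactly by bidding zero.
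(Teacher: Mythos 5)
Your proposal has a genuine gap, and it sits exactly at the point you call the ``key structural observation.'' You claim that agent $1$'s dominant strategy to bid $v_1$ survives any deviation by player $j$ because incoming transfers $p_{j1}$ ``do not depend on agent $1$'s own bid in a way that could overturn the case analysis,'' and because the penalty $M$ ``dominates any bounded transfer.'' Both halves of this are false in the paper's model. A payment policy is a function of the full action profile $s$, so a deviating player $j$ can pay agent $1$ contingent precisely on agent $1$ bidding some $b \neq v_1$. Moreover, such a bribe can be as large as $M$ --- the \emph{same} constant as player $1$'s self-imposed penalty: under condition (2) agent $1$'s total outgoing penalty is $M$ (she pays $M/(n-1)$ to each of the $n-1$ others), while player $j$'s allowed payment is bounded by that same $M$. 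A bribe of $M$ contingent on $b_1 = b$ therefore cancels the penalty exactly, and bids $b \neq v_1$ need no longer be dominated (e.g., when player $j$ bids close to $v_1$, so that agent $1$'s margin from bidding $v_1$ is small). So you cannot conclude that agent $1$ bids $v_1$ ``regardless of $j$'s policy''; ruling out exactly this kind of bribery is the entire content of the claim, and assuming it away makes the rest of your argument (agent $1$ always wins, so $j$'s only revenue source is the $\epsilon/(n-1)$ subsidy) circular.

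The paper handles this by cost accounting rather than by robustness of the dominant strategy, and the accounting splits into two cases that your proposal never distinguishes. For $n \geq 3$: to make some $b \neq v_1$ undominated for agent $1$, player $j$ must pay agent $1$ at least $M - v_1$, while her gross gain is at most $v_j$ from the auction plus the incoming $M/(n-1)$ from condition (2); then $v_j - (M - v_1) + \frac{M}{n-1} < 0$ once $M > 2nv_1$, so the deviation is unprofitable. For $n = 2$ this accounting fails outright --- the incoming transfer from condition (2) is the full $M$, which exactly covers the bribe, so the naive bound gives $v_2 > 0$ --- and the paper needs a genuinely dynamic argument: any bid of agent $1$ bribed with less than $M$ remains dominated, and even for a bid bribed with exactly $M$, the no-regret property forces agent $1$ to win at least a $1-\epsilon$ fraction of rounds, so agent $2$'s achievable payoff from nonzero bids is at most $v_2\epsilon < \epsilon$, below the $\epsilon$ subsidy she collects by bidding zero. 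The absence of any $n=2$ versus $n\geq 3$ distinction in your write-up is a direct symptom of the missing step: once bribes to agent $1$ are taken seriously, the two cases require different arguments.
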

\begin{proof}
For simplicity, and without loss of generality, we rescale utilities and payments and work in units where $v_1 = 1$. We begin by looking at auctions with $n > 2$. 
In every deviation by some agent $j>1$, in order to ensure that a bid $b < v_1$ is not dominated for agent $1$, agent $j$ must pay agent $1$ an amount of at least $M - v_1$, as otherwise, for any fixed bids of the other players agent $i$ gets negative utility from bidding $b$ and positive utility from bidding $v_1$. The utility for player $j$ in this case is bounded from above by $v_j - (M - v_1) + \frac{M}{n-1} < 2v_1 - \frac{n-2}{n-1}M < 2v_1\big(1 - \frac{n(n-2)}{n-1} \big) < 0$, where the second inequality is since $M > 2 n v_1$ and the last inequality holds for any $n\geq 3$. That is, in every deviation by agent $j$ in which $j$ can win and get a positive utility \emph{from the auction} (by incentivizing agent $1$ to bid below $v_j$ with some probability 
in its bidding dynamic), the payment is too large, so the resulting utility is negative. So player $j$ prefers not to make payments and to get a utility of $\epsilon/(n-1)$.  
The case of $n=2$ can be shown with a slightly different argument. In this case, agent $1$ pays agent $2$ an amount of $M$ whenever agent $1$ bids $b \neq v_1$. 
First, under the no-regret property of agent $2$, agent $2$ can outbid $v_1$ only rarely (since any winning bid $b_2>v_1$ yields negative utility while bidding $0$ yields non-negative utility). 
Therefore, bidding $v_1$ gives agent $1$ a utility of at least $v_1-\epsilon$, and we claim that agent $1$ almost always wins the auction.   
To see this, notice that a bid by agent $1$ for which agent $2$ pays agent $1$ an amount less than $M$ is still dominated by bidding $v_1$ (which gives agent $1$ at least zero utility), and therefore such bids are not played by agent $1$ with payments less than $M$ by agent $2$. Suppose now that there is a bid $b$ for which agent $2$ pays agent $1$ an amount of $M$. When agent $1$ loses with the bid $b$, she gets a utility of zero. When agent $1$ wins, she has an expected payoff of $v_1 - \E[b_2|b > b_2]$. Denote by $q$ the probability that $b$ is a winning bid for agent $1$. We have that either $b$ is dominated by $v_1$, and thus not played by agent $1$, or 
$
v_1 - \epsilon \leq q(v_1 - \E[b_2|b > b_2]) \leq qv_1,
$ 
so we have 
$
q \geq 1 - \frac{\epsilon}{v_1}. 
$ 
And so agent $1$ wins at least $1-\epsilon$ fraction of the time.  
Now, consider agent $2$. By the argument above, agent $2$'s payoff from using bids that are above zero is at most 
$v_2 \cdot \epsilon < \epsilon$. Therefore, using such bids gives agent $2$ positive regret compared to bidding zero (when she bids zero, she gets a payment of $\epsilon$). Thus, agent $2$ almost always bids zero. It follows that player $2$ cannot make payments that increase her utility, so the policy of always paying zero is a best response for player $2$.  
\end{proof}

We are now ready to complete the proof. With the policy profile that we have, player $1$ gets a utility of $v_1 - \epsilon$. By taking a sufficiently small value of $\epsilon$,  player $1$ gets a utility per time step that is arbitrarily close to $v_1$. Since $v_1$ is the highest possible welfare in the game and player $1$ gets $\epsilon$ close to it for all $\epsilon>0$, this is an approximate best response for player $1$, and we have an $\epsilon$-Nash equilibrium. The auctioneer's revenue in this equilibrium is zero, other than vanishing profits from the initial learning phase of the agents.     
\end{proof}

\subsection{First-price auctions}
In contrast to the second-price auction, the first-price
auction is known not to be incentive compatible. Intuitively, the
direct dependence of the utility on the bid requires players to
constantly optimize their responses to the bids of the other players.
There has thus been significant interest in learning strategies,
outcomes, and dynamics in this setting 
(see Section \ref{sec:related-work} for references). Interestingly, despite the
significant difference from the truthful second-price auction, it is
well known (and not hard to see) that the Nash equilibrium of the
first-price auction yields the truthful outcome of the second-price
auction. 
It has been shown that 
mean-based  
(as defined in \cite{braverman2018selling}, see Appendix \ref{sec:appendix-definitions})
regret-minimization dynamics \cite{kolumbus2022auctions}, as well as best-response dynamics \cite{nisan2011best} can converge only to this outcome.  

In \cite{feldman2016correlated}, it was shown that the first-price auction also
has ``collusive'' coarse correlated equilibria\footnote{A generalization of the notion of correlated equilibrium, equivalent to all players having no external regret; also known as the Hannan set. See Appendix \ref{sec:appendix-definitions} for references.} (CCE) with lower revenue for the auctioneer than the second-price outcome, and higher utility for the players. The existence of such equilibria poses an opportunity for players to try and reach cooperative outcomes and increase their payoffs (at the expense of the auctioneer). 
However, to the best of our knowledge, no natural process has been described in nearly a decade since that result that attains any of these equilibria.

Moreover, in \cite{kolumbus2022auctions}, it is shown that even a scenario where players strategically manipulate the objective functions of their learning algorithms is not sufficient to reach any form of cooperation. 
In fact, convergence of standard no-regret agents to the second-price outcome induces a dominant strategy for users to provide the true objectives for their agents. 
To understand how cooperative outcomes could perhaps still be reached by learners in the first-price auction, we observe that in the collusive equilibria
described in \cite{feldman2016correlated}, the low-value player wins in some 
fraction of the auctions and has positive utility, whereas under
simple no-regret dynamics (as those studied in~\cite{kolumbus2022auctions}), only a single player wins in the limit. This suggests that payments between the agents during their dynamics could enable the agents to share the welfare and reach cooperative outcomes. Yet, the existence of such equilibria does not by itself explain how they could arise under learning; the question is whether players have incentives to induce them.

The following result shows that for a large class of learning algorithms, players have an incentive to use payments in a first-price auction, and in equilibria of the payment game, the agents reach cooperative outcomes, with a significant reduction in revenue for the auctioneer.  

\begin{theorem}\label{thm:first-price-cooperative-outcomes}
(1) The single-item first-price auction with $n$ players and any mean-based no-regret bidding agents is not stable: players prefer to use payments. (2) In the two-player case, every equilibrium of the payment game is a strong Pareto improvement (i.e., both players are  better off) and has lower revenue for the auctioneer compared to the game without payments. 
\end{theorem}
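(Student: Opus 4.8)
The plan is to handle the two parts separately, using as the benchmark the no-payment outcome in which mean-based dynamics converge to the truthful second-price outcome: player $1$ wins at price $v_2$, so $U_1 = v_1 - v_2$, $U_2 = 0$, and the revenue is $v_2$.

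For part (1), I would exhibit a single profitable unilateral deviation from the all-zero profile, mirroring the second-price construction but adapting to the fact that in a first-price auction the winner pays her own bid. Player $1$'s deviation policy has two ingredients: (a) a small reward $\delta_j$ paid to each agent $j>1$ whenever $b_j = 0$, with $\sum_j \delta_j < v_2$; and (b) an off-path penalty making it dominant for agent $1$ to outbid the others --- e.g.\ agent $1$ pays a large amount $M$ to some other agent whenever $b_1$ fails to exceed $\max_{j>1} b_j$. Ingredient (b) is the first-price analogue of ``committing to bid $v_1$'' in the second-price proof: it lets player $1$ credibly win while, on path, bidding only marginally above the others and hence paying the auctioneer almost nothing. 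I would then argue from the mean-based property that, since agent $1$ outbids every bid below $v_1$, no positive bid is ever a winning bid for the other agents in the long run, so their unique asymptotically optimal action is to bid $0$ and collect $\delta_j$; agent $1$ then wins at a vanishing price and $U_1 \to v_1 - \sum_j \delta_j > v_1 - v_2$. This strictly beats the benchmark, so zero payments is not an equilibrium.

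For part (2), I would first reduce as much as possible to the general two-player result (Theorem~\ref{thm:two-players-Pareto-improvement}), which already gives that every equilibrium is a weak Pareto improvement, i.e.\ $U_1 \ge v_1 - v_2$ and $U_2 \ge 0$. Efficiency (player $1$ wins in the limit) and the revenue bound then follow from accounting: when player $1$ wins, $U_1 + U_2 = v_1 - \mathrm{rev}$, so $\mathrm{rev} \le v_2$, while a winning low-value player would force $U_1 + U_2 \le v_2$, contradicting $U_1 \ge v_1 - v_2$ once the value gap is used. To upgrade $\mathrm{rev} \le v_2$ to the strict bound, I would argue by elimination: if $\mathrm{rev} = v_2$ then $U_1 + U_2 = v_1 - v_2$ forces $U_1 = v_1 - v_2$ and $U_2 = 0$, reproducing the benchmark, and the part-(1) deviation (now played against player $2$'s equilibrium policy) strictly improves player $1$ --- a contradiction. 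Hence $\mathrm{rev} < v_2$, and the total player surplus $v_1 - \mathrm{rev}$ strictly exceeds $v_1 - v_2$, leaving a positive surplus to split.

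Finally I would show both inequalities are strict. If $U_2 = 0$, player $2$ has a profitable deviation: a commitment, via her own penalty policy, to bid just above the current winning bid, which forces player $1$ either to raise the revenue (lowering $U_1$) or to buy player $2$'s cooperation with a strictly positive side payment --- either way yielding $U_2 > 0$. Symmetrically, if $U_1 = v_1 - v_2$ (player $2$ holds all the extra surplus), player $1$'s part-(1) deviation recaptures more than $v_1 - v_2$. I expect the main obstacle to be precisely these deviation arguments in the \emph{dynamic} setting: unlike a one-shot game, I must show that the payment-augmented first-price auction drives mean-based agents to the claimed limit outcomes despite transient overbidding and the cyclic undercutting typical of first-price learning. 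The commitment penalties are the device that controls this --- by making one agent's outbidding history-independent they collapse the cycling and pin down the limit --- but verifying that this holds uniformly across \emph{all} equilibria (rather than for one constructed profile), and that the two players' commitment devices cannot combine into a mutually destructive high-revenue equilibrium, is the delicate step.
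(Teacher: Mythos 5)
There is a genuine gap, and it is in the heart of part (1): your proposed unilateral deviation does not produce the outcome you claim, for a reason specific to first-price auctions. For any other agent $j$ to prefer bidding $0$ (collecting $\delta_j$) over outbidding agent $1$, agent $1$'s bids must essentially always lie above $v_j - \delta_j$; otherwise agent $j$'s regret against the fixed bid ``slightly above agent 1's typical bid'' is positive, since winning at a price below $v_j - \delta_j$ beats $\delta_j$. Concretely, in any CCE of the game induced by your policy, if agent $1$'s bids concentrated near $0$ then agent $2$'s no-regret condition would force $U_2 \gtrsim \max_b (v_2-b)F_1(b) \approx v_2 > \delta_2$, contradicting the assumption that agent $2$ bids $0$ and collects only $\delta_2$. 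So ``winning at a vanishing price while the others bid zero'' is impossible: the cheapest way to keep agent $2$ at zero is to bid about $v_2 - \delta_2$, which yields $U_1 \le v_1 - (v_2-\delta_2) - \delta_2 = v_1 - v_2$, i.e.\ exactly the benchmark, never strictly above it. This is precisely why the paper does \emph{not} mimic its second-price construction here. Its deviation uses only your ingredient (a) (pay $\eta$ when agent $2$ bids zero, no penalty clause), derives the unique \emph{mixed} Nash equilibrium of the induced game (Lemma~\ref{thm:lemma-first-price-NE-with-eps-payment}), optimizes $\eta = v_2/2$ to get the modest strict gain $U_1 = v_1 - v_2 + v_2^2/(4v_1)$ (Lemma~\ref{thm:lemma-first-price-NE-utilities}), and then does the real technical work of showing mean-based agents attain exactly these marginals and utilities (Lemma~\ref{thm:lemma-first-price-mean-based-NE-CDFs}). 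In this outcome agent $2$ wins a positive fraction of the time and revenue is positive --- consistent with the theorem's part (2), and inconsistent with your claimed $U_1 \to v_1 - \sum_j \delta_j$.

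Your ingredient (b) has a second, independent failure: in this model penalties must be \emph{paid to other agents}, and your penalty triggers on the joint event $b_1 \le \max_{j>1} b_j$, which the recipients can cause. A mean-based recipient of $M$ will learn to outbid agent $1$ in order to collect $M$ (this dwarfs all auction payoffs), producing escalation rather than capitulation; in the paper's second-price construction the penalty event is $b_1 \neq v_1$, a condition on agent $1$'s \emph{own} bid that no opponent can trigger, and committing to $b_1 = v_1$ is costless there only because the winner pays the second-highest bid --- neither feature survives in first price. Finally, your part (2) scaffolding (Theorem~\ref{thm:two-players-Pareto-improvement} for weak Pareto improvement, welfare accounting for $\mathrm{rev} \le v_2$, and elimination for strictness) is structurally reasonable and close in spirit to the paper's accounting argument, but both of your strictness steps ($U_1 > v_1 - v_2$ and $U_2 > 0$) invoke the broken deviation; the paper instead gets the first from its valid $v_2^2/(4v_1)$ lower bound, and the second from the observation that player $1$'s above-benchmark utility forces positive payments (since without payments mean-based dynamics can only reach the second-price outcome, by \cite{kolumbus2022auctions}).
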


\begin{remark}
The first part of Theorem \ref{thm:first-price-cooperative-outcomes}, showing that first-price auctions are not stable, applies to any number of players by a reduction to a game between the two top-valued players. In this reduction, all the lemmas extend with only minor changes in their proofs. For simplicity of presentation, we give below the proof outline for the two-player case. See Appendix \ref{sec:appendix-FP-auction-n-player-NE} for further details. 
\end{remark}

The proof proceeds via a sequence of lemmas. We sketch the argument and state the lemmas here; the full formal details appear in Appendix~\ref{sec:appnedix-Thm2-proof}.  
The first part of the proof is the construction of a unilateral payment policy for the high-valued player and the analysis of the resulting dynamics. 
Under this policy, the agent with the higher value pays $\eta>0$ to the lower-value agent whenever the latter bids zero. 
We derive the (unique) mixed Nash equilibrium of the game with this
payment policy, with $\eta$ as a parameter (Lemma \ref{thm:lemma-first-price-NE-with-eps-payment}). We then evaluate the
utilities of the players in that equilibrium and show that the optimal
value of $\eta$ for the high-valued player is half the value
of the second player (Lemma \ref{thm:lemma-first-price-NE-utilities}),
and that in this case, the players have higher utilities than in the
game without payments. 
We next prove that in the game between the agents, while the agents do not play a mixed equilibrium and their strategies are correlated, the marginal distributions and payoffs are nevertheless the same as in the mixed Nash equilibrium we derived (Lemma \ref{thm:lemma-first-price-mean-based-NE-CDFs}). Finally, using the fact that these utilities for the players are achieved through a unilateral payment by the high-valued player, we show that they provide a lower bound on the utilities in any equilibrium of the payment-policy game. 
We now state the lemmas formally.

\begin{lemma}\label{thm:lemma-first-price-NE-with-eps-payment}
Consider the single-item first-price auction with two players and player values $v_1 \geq v_2$, 
where player $1$'s payment policy is such that agent $1$ pays $\eta$ to 
agent $2$, where $0 < \eta < v_2$, whenever agent $2$ bids zero.
The resulting game between the agents has a unique Nash
equilibrium where the cumulative density functions of the bids of
agents $1$ and $2$ are {\small$
F_1(x) = \frac{\eta}{v_2 - x} $}
and 
{\small$ G_2(x) = \frac{v_1 - v_2 + \eta}{v_1 - x}, $} respectively (so the bid density functions are supported on $[0, v_2 - \eta]$).
\end{lemma}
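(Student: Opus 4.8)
The plan is to treat the induced game as a two–player complete–information first–price auction with one twist: since player $1$ pays $\eta$ exactly when agent $2$ bids $0$, from agent $1$'s perspective this payment is a term $-\eta\Pr[b_2=0]$ that is \emph{independent of her own bid}, whereas from agent $2$'s perspective the action ``bid $0$'' carries a guaranteed outside reward of $\eta$. First I would write the expected payoffs carefully, respecting the tie-break rule (ties go to the lower index, i.e.\ to player $1$). Writing $F_1$ and $G_2$ for the bid CDFs, agent $1$ bidding $b_1$ earns $(v_1-b_1)G_2(b_1)-\eta G_2(0)$ (she wins whenever $b_2\le b_1$, ties included), while agent $2$ bidding $b_2>0$ earns $(v_2-b_2)F_1(b_2^-)$ and bidding $0$ earns exactly $\eta$ (she loses every tie at $0$). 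The key simplification is that $-\eta G_2(0)$ does not depend on $b_1$, so agent $1$'s indifference condition is the ordinary first-price condition $(v_1-b_1)G_2(b_1)=\text{const}$, while agent $2$'s indifference pins her value to the outside option $\eta$.

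For existence I would verify the stated CDFs by guess-and-check. Agent $2$'s indifference $(v_2-b_2)F_1(b_2)=\eta$ immediately yields $F_1(x)=\eta/(v_2-x)$; agent $1$'s indifference $(v_1-b_1)G_2(b_1)=c$ together with the requirement that both CDFs equal $1$ at the common top $\bar b$ forces $c=v_1-\bar b$ and hence $G_2(x)=(v_1-\bar b)/(v_1-x)$. Evaluating agent $2$'s indifference at $\bar b$ (where $F_1(\bar b)=1$) gives $v_2-\bar b=\eta$, i.e.\ $\bar b=v_2-\eta$, which reproduces the claimed formulas and support $[0,v_2-\eta]$. It remains to rule out deviations: bids above $v_2-\eta$ strictly lower each player's payoff (player $1$ wins with certainty but overpays, player $2$ wins with certainty but nets less than $\eta$), and the atoms $F_1(0)=\eta/v_2$ and $G_2(0)=(v_1-v_2+\eta)/v_1$ lie strictly in $(0,1)$ precisely because $0<\eta<v_2$, so both are legitimate distributions. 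The asymmetric tie-break at $0$ is what makes agent $2$'s atom harmless (she never wins there) and keeps agent $1$ indifferent all the way down to $b_1=0$.

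The substantive work, and the main obstacle, is uniqueness. Here I would run the standard structural arguments for first-price auctions, adapted to the constant payment term and the tie-break. First, the two supports share a common upper bound $\bar b$: if one player's support reached higher, she could shave her top bids without changing her winning probability. Second, neither player places an atom at any point $c>0$, by the usual undercutting contradiction: such an atom would make bidding just above $c$ discretely more attractive to the opponent than bidding at or just below $c$, opening a gap in the opponent's support just below $c$, into which the atom's owner would then profitably shift mass down. Third, neither support has interior gaps, for the same shaving reason, so both equal the full interval $[0,\bar b]$. Once the supports are a common gapless interval with atoms only at $0$, the two indifference conditions must hold throughout $(0,\bar b]$, which (as in the existence step) forces $\bar b=v_2-\eta$ and the unique CDFs. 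The delicate points to get right are the correct accounting of the tie-break at $0$ (so that both players may legitimately place atoms there) and establishing that agent $2$'s equilibrium value is exactly $\eta$ rather than strictly more: if agent $2$ never used the bid-$0$ option, agent $1$ would pay nothing and the interaction would reduce to an ordinary first-price auction in which agent $2$'s payoff is $0<\eta$, contradicting the optimality of abandoning that option.
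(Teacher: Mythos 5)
Your proposal is correct, and its existence part follows essentially the paper's own route: both proofs pin down the equilibrium through the two indifference conditions --- agent $2$ is indifferent between any bid in the support and the outside option of $\eta$ at zero, giving $F_1(x)=\eta/(v_2-x)$, and agent $1$'s payment term $-\eta\,G_2(0)$ is bid-independent, so her indifference is the ordinary first-price condition $(v_1-x)G_2(x)=\mathrm{const}$, giving $G_2(x)=(v_1-v_2+\eta)/(v_1-x)$ with common support top $\bar b=v_2-\eta$. (The paper reaches the same constant via an ODE and continuity at zero; you reach it by evaluating at the top of the support --- an immaterial difference.)

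Where you genuinely diverge is uniqueness. The paper proceeds from an ansatz: it \emph{assumes} the equilibrium has atoms at most at $0$ and is absolutely continuous with positive density on $(0,v_2-\eta]$, and then shows the CDFs are forced within that class; the promised ``verification of the ansatz'' amounts to the consistency check $G_2(v_2-\eta)=1$. Your proposal instead supplies the structural arguments that make uniqueness unconditional: common supremum of the supports by bid-shaving, no interior atoms by the undercutting argument (adapted to the tie-break favoring agent $1$), no gaps, and --- the step the paper's proof does not address at all --- pinning agent $2$'s equilibrium value at exactly $\eta$, via the observation that if she never bid zero the game would collapse to an ordinary first-price auction in which her payoff is $0<\eta$, so the bid-zero option must be exercised. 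You also account explicitly for the asymmetric tie-break at zero, which is what permits both players to carry atoms there; the paper leaves this implicit. In short: same derivation of the candidate equilibrium, but your treatment of uniqueness is more complete than the paper's, at the cost of the extra (standard but fiddly) support-structure lemmas, which you would need to write out in full for a finished proof.
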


\begin{lemma}\label{thm:lemma-first-price-NE-utilities}
    If the agents 
    reach the Nash equilibrium
    outcomes, the optimal 
    value of $\eta$ for player $1$ is $v_2/2$. With this value 
    of $\eta$, the utilities for the players are {\small $v_1 - v_2 +
    \frac{v_2^2}{4v_1}$} for player $1$ and {\small$\frac{v_2}{2}$}
    for player $2$. The winning frequency of player $2$ is in the interval $[0,3/8]$, with the lower bound attained as $v_2/v_1 \rightarrow 0$ and the upper bound attained as $v_2/v_1 \rightarrow 1$.
\end{lemma}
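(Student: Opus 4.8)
The plan is to treat the equilibrium cumulative distribution functions $F_1$ and $G_2$ from Lemma~\ref{thm:lemma-first-price-NE-with-eps-payment} as given, read off each player's equilibrium payoff from the mixed-equilibrium indifference condition, optimize the resulting payoff of player~$1$ over the single policy parameter $\eta$, and finally compute player~$2$'s winning frequency as an integral against the equilibrium marginals.

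I would first compute player~$2$'s utility. Since the equilibrium is mixed with the two bid distributions independent, every bid $b$ in the common support $[0,v_2-\eta]$ must yield player~$2$ the same payoff. For a bid $b>0$ player~$2$ wins exactly when $b_1<b$ (ties go to player~$1$), so its payoff is $F_1(b)(v_2-b)=\frac{\eta}{v_2-b}(v_2-b)=\eta$; bidding $0$ also yields $\eta$ because of the policy payment. Hence $U_2=\eta$ for every admissible $\eta$. Next I would compute player~$1$'s utility the same way: player~$1$ wins with a bid $b$ when $b_2\le b$, so its auction payoff is $G_2(b)(v_1-b)=v_1-v_2+\eta$, constant over the support (which also confirms indifference). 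From this I subtract player~$1$'s expected outgoing payment, which is $\eta$ times the probability that player~$2$ bids zero, namely $\eta\,G_2(0)=\eta\,\frac{v_1-v_2+\eta}{v_1}$, giving $U_1(\eta)=(v_1-v_2+\eta)\frac{v_1-\eta}{v_1}$.

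To optimize, note that $v_1 U_1(\eta)=v_1(v_1-v_2)+v_2\eta-\eta^2$ is a downward parabola in $\eta$ with vertex at $\eta=v_2/2$, which lies in the feasible interval $(0,v_2)$ and is therefore the maximizer. Substituting $\eta=v_2/2$ yields $U_1=v_1-v_2+\frac{v_2^2}{4v_1}$ and $U_2=\eta=\frac{v_2}{2}$, as claimed.

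Finally, the winning frequency of player~$2$ is $w_2=\Pr[b_2>b_1]=\int_{0}^{v_2-\eta}F_1(x)\,g_2(x)\,dx$, where $g_2(x)=\frac{v_1-v_2+\eta}{(v_1-x)^2}$ is the density of the continuous part of $G_2$ (the atom of $b_2$ at $0$ contributes nothing, since then player~$1$ wins). Fixing $\eta=v_2/2$ and substituting $x=v_2 t$, this reduces to a one-parameter expression in $r=v_2/v_1\in(0,1)$, namely $w_2(r)=\frac{r(2-r)}{4}\int_0^{1/2}\frac{dt}{(1-t)(1-rt)^2}$, whose integral I would evaluate in closed form by partial fractions. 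Taking limits then gives $w_2(r)\to 0$ as $r\to0$ and $w_2(r)\to 3/8$ as $r\to1$ (the latter being a $0/0$ limit handled by a short Taylor expansion of the closed form). I expect the main obstacle to be the range claim itself: to conclude $w_2\in[0,3/8]$ rather than merely recording the two limits, I would show $w_2(r)$ is monotonically increasing on $(0,1)$, which after differentiating the closed form reduces to an elementary—but somewhat delicate—inequality in $r$. The remainder is routine substitution and calculus.
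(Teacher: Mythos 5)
Your proposal is correct and follows essentially the same route as the paper: it takes the equilibrium CDFs from Lemma~\ref{thm:lemma-first-price-NE-with-eps-payment}, obtains the same payoff $U_1(\eta)=v_1-v_2+\eta(v_2-\eta)/v_1$ (your factored form $(v_1-v_2+\eta)(v_1-\eta)/v_1$ is algebraically identical), maximizes at $\eta=v_2/2$, and evaluates the same integral $\int_0^{v_2-\eta}F_1(x)\,g_2(x)\,dx$ for player~$2$'s winning frequency, with the same closed form and the same limits $0$ and $3/8$. Your final remark that the interval claim $w_2\in[0,3/8]$ requires a monotonicity argument beyond the two limiting values is well taken --- the paper's own proof also stops at the closed form and the two limits, so on that point you are, if anything, more careful than the original.
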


The result of Lemma \ref{thm:lemma-first-price-NE-utilities} is for 
Nash equilibrium. While, in principle, 
no-regret agents may play an arbitrary CCE and not necessarily this outcome, the next lemma shows that the family of Nash-equilibrium distributions
parameterized by $\eta$ captures well the marginal distributions and
utilities of the broad family of mean-based \cite{braverman2018selling} no-regret agents in these games.      

\begin{lemma}\label{thm:lemma-first-price-mean-based-NE-CDFs}
    Consider the game where player $1$'s agent pays $\eta$ to player $2$'s agent whenever the latter agent bids zero. Fix a CCE with marginal
    distributions $F_1$ and $G_2$ to which the dynamics of mean-based agents converge with positive
    probability. 
    Then,  
    {\small$F_1(b) = F_1^{NE}(b) =
    \frac{\eta}{v_2 - b}$, $G_2(b) = G_2^{NE}(b) = \frac{v_1 - v_2 +
    \eta}{v_1 - b}$}, and the players' utilities are the same as in the 
    Nash equilibrium. 
\end{lemma}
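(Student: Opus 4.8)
The plan is to characterize the limiting empirical bid distributions of the two mean-based agents and show that the marginal best-response conditions they must satisfy are exactly the equations solved in Lemma~\ref{thm:lemma-first-price-NE-with-eps-payment}, whose solution is unique. I would fix the convergent subsequence defining the CCE and let $F_1,G_2$ denote the time-averaged empirical bid CDFs of agents $1$ and $2$. The first step is to translate the mean-based property into a statement about these marginals. The cumulative hindsight reward of a bid $b$ for agent $1$ through round $t$ is $(v_1-b)\cdot|\{s<t: b>b_2^s\}|$ minus the $\eta$-payment term; the latter depends only on whether agent $2$ bids zero, not on agent $1$'s own bid, and therefore shifts all of agent $1$'s actions equally, so the hindsight-optimal bid depends only on agent $2$'s empirical marginal. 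By the mean-based property, in all but a vanishing fraction of the (late) rounds each agent bids an approximate maximizer of its hindsight reward against the opponent's empirical marginal, so in the limit each agent's support is contained in the best-response set of a single fixed marginal. Crucially, this reduction to marginals holds regardless of how the two agents' bids are correlated across rounds.

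The second step is to pin down the supports and the marginals. I would show, by the standard first-price arguments adapted to the limiting distributions, that both marginals are supported on $[0,v_2-\eta]$ with an atom at $0$ and no interior atoms or gaps: an interior atom or a gap would make a bid slightly above it strictly hindsight-better, contradicting concentration on the best-response set, and bids above $v_2-\eta$ are cumulatively dominated once the opponent's support is established (this last point requires a simultaneous/bootstrapping argument on the two supports). Because the whole interval lies in the best-response set, agent $1$ must be hindsight-indifferent across it, giving $(v_1-b)G_2(b)\equiv c_1$, and likewise agent $2$'s indifference — with bidding $0$ yielding the payment $\eta$ — gives $(v_2-b)F_1(b)\equiv\eta$. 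The boundary conditions $G_2(v_2-\eta)=F_1(v_2-\eta)=1$ fix $c_1=v_1-v_2+\eta$, and solving yields exactly $F_1(b)=\tfrac{\eta}{v_2-b}$ and $G_2(b)=\tfrac{v_1-v_2+\eta}{v_1-b}$, matching the unique Nash marginals of Lemma~\ref{thm:lemma-first-price-NE-with-eps-payment}.

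It remains to match the payoffs, and this is where I expect the main difficulty. The no-regret (CCE) property immediately gives the lower bounds $U_1\ge U_1^{NE}$ and $U_2\ge U_2^{NE}$, and since the atom $G_2(0)$ is determined by the marginal, the expected transfer $\eta\,G_2(0)$ is already pinned down. The subtlety is the matching upper bound: unlike the marginals, the realized welfare and revenue are \emph{not} determined by $F_1,G_2$ alone, since they depend on the joint law of $(b_1^t,b_2^t)$ — different couplings of the same marginals yield different winner identities and prices — and the no-regret condition only lower-bounds each agent's utility, never upper-bounds it. To close the argument I would use the finer sequential structure of mean-based play, beyond mere no-regret, to show that the realized conditional winning frequency at each bid level coincides, up to $o(1)$, with the marginal win probability, so that the realized welfare and revenue equal their Nash values. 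Intuitively, since the hindsight-reward landscape is flat on the common support, adaptivity yields no excess over the best fixed bid; making this precise while controlling the correlation that the CCE permits is the crux of the proof.
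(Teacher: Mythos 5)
Your high-level strategy---use the mean-based property to confine each agent's support to the approximate best-response set against the opponent's empirical marginal, then solve for the marginals---matches the paper's starting point, but your execution of the second step has a genuine gap, and it sits exactly where the paper's proof does its real work. Your indifference equation $(v_2-b)F_1(b)\equiv\eta$ with boundary condition $F_1(v_2-\eta)=1$ presupposes that bidding zero remains in agent $2$'s best-response set (so that the indifference constant is $\eta$) and that the supports fill the interval $[0,v_2-\eta]$. Neither is automatic. The dangerous alternative is that interior bids become strictly hindsight-better than bidding zero, so a mean-based agent $2$ stops bidding zero altogether; then payments asymptotically never occur, and the dynamics are effectively those of a \emph{standard} first-price auction, in which mean-based bids escalate toward $v_2$. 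Your local no-gap/no-atom arguments cannot exclude this configuration, because it is not locally inconsistent---excluding it is a global statement about the dynamics. The paper's proof is structured precisely to handle this: it first proves only the one-sided bounds $F_1(b)\geq F_1^{NE}(b)$ and $G_2(b)\geq G_2^{NE}(b)$ (which do follow from the support-in-best-response property), and then shows that strict inequality at any point would force agent $2$ to abandon the zero bid, yielding a payment-free CCE; by the result of \cite{kolumbus2022auctions}, mean-based dynamics in the standard first-price auction can only converge to CCEs whose support includes bids above $v_2-\eta$, contradicting the support bound established earlier. That reduction to the known no-payments result (or a re-derivation of it) is the missing ingredient in your proposal; the ``simultaneous/bootstrapping argument on the two supports'' you allude to would have to contain it, and it is not an elementary bootstrap.

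On your third step: you are right that realized utilities in a CCE are not pinned down by the marginals alone---one can redistribute joint mass, preserving both marginals and player $2$'s utility, while strictly raising player $1$'s utility---and the paper's own write-up passes over this quickly, reading the utilities off the hindsight indifference conditions once the marginals are fixed. But your proposal does not close this gap either: you defer the ``crux'' (that mean-based play yields no excess over the best fixed bid) without an argument, and the intuition that flatness of the hindsight landscape prevents excess is insufficient on its own, since any excess comes from correlation between the simultaneous bids, not from the shape of the landscape. So the proposal is incomplete both at the step the paper resolves by invoking prior work on payment-free first-price dynamics, and at the utilities step, where you correctly identify a subtlety but leave it unresolved.
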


\begin{figure}[t]
\centering
\begin{subfigure}{.49\linewidth}    \includegraphics[width=1.07\linewidth]{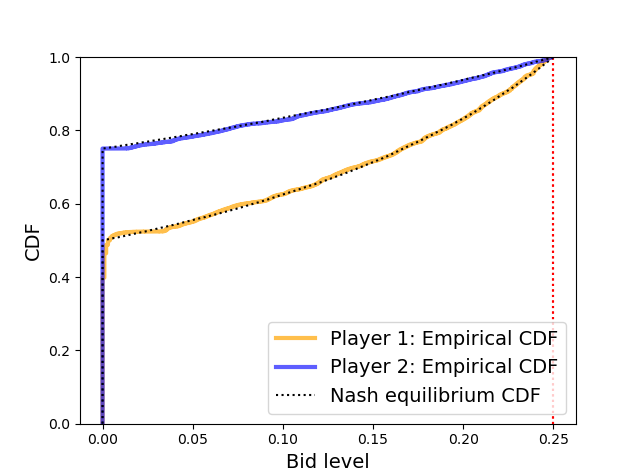}
    \caption{{\small Cumulative distributions of bids}}  
    \label{fig:first_price_CDFs}
\end{subfigure}
\begin{subfigure}{.49\linewidth}
        \includegraphics[width=1.07\linewidth]{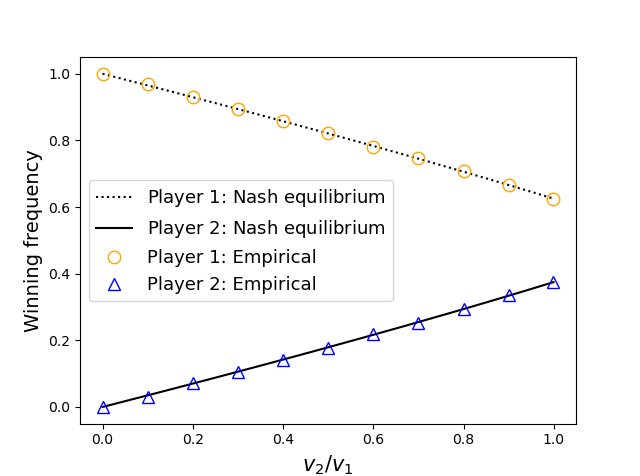}
        \caption{{\small Winning frequencies}} \label{fig:first_price_winning_frequencies}
\end{subfigure}
\caption{Dynamics of bidding agents using Hedge algorithms with payments in first-price auctions.}
\label{fig:first_price_auction}
\end{figure}

Figure \ref{fig:first_price_auction} shows simulation results with
Hedge agents (a version of multiplicative weights \cite{slivkins2019introduction})
in a sequence of $100$,$000$ first-price auctions with payments between the agents. The left panel compares the Nash equilibrium theoretical
prediction from Lemma \ref{thm:lemma-first-price-mean-based-NE-CDFs}
with the empirical CDF from the agents' dynamics with $v_1 = 1,
v_2=0.5$, and agent $1$ pays agent $2$ $\eta=v_2/2$ when the latter bids zero, as in Lemma \ref{thm:lemma-first-price-NE-utilities}. The right panel compares the theoretical winning frequencies
of the players as specified in Lemma \ref{thm:lemma-first-price-NE-utilities} with their empirical winning
frequencies as a function of the ratio of values $v_2/v_1$ in multiple simulations. The long-term marginal bid distributions and winning frequencies are
clearly consistent with the theoretical predictions. We observe similar
results with other algorithm variants such as follow the perturbed leader
and linear multiplicative weights \cite{arora2012multiplicative}.
There is very little variation between simulation instances.  

Interestingly, it turns out that these dynamics of the agents with payments recover, as a special case, the known collusive CCE distribution of the auction without payments from \cite{feldman2016correlated}. 

\begin{observation}\label{thm:observation-min-revenue-CCE}
    For symmetric bidders with $v_1 = v_2 = 1$ and payment of $\eta =
    1/e$, the Nash equilibrium bid distribution of the 
    game with payment $\eta$  
    is the same as that of the minimum-revenue
    coarse correlated equilibrium of the standard first-price auction
    (without payments) given in
    \cite{feldman2016correlated}.  
\end{observation}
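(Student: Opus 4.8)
The plan is to prove the observation by direct substitution into the closed-form equilibrium distributions of Lemma~\ref{thm:lemma-first-price-NE-with-eps-payment}, followed by a term-by-term comparison against the explicit minimum-revenue CCE of~\cite{feldman2016correlated}. First I would set $v_1 = v_2 = 1$ in the formulas of Lemma~\ref{thm:lemma-first-price-NE-with-eps-payment}; the validity condition $0 < \eta < v_2$ holds since $\eta = 1/e \approx 0.368 < 1$. Because $v_1 - v_2 = 0$, the two CDFs collapse to a single symmetric law $H(x) = F_1(x) = G_2(x) = \frac{\eta}{1-x}$ on the support $[0, 1-\eta]$, and substituting $\eta = 1/e$ gives $H(x) = \frac{1}{e(1-x)}$ on $[0, 1 - 1/e]$. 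I would then extract the three structural features on which the comparison rests: (i) the functional form $\frac{1}{e(1-x)}$ on the interior of the support; (ii) an atom of mass $H(0) = 1/e$ at the origin (since below the support the CDF vanishes, the value $H(0)=\eta$ is a point mass); and (iii) the support cap $1 - \eta = 1 - 1/e$, at which $H$ reaches $1$. A one-line check that the density $H'(x) = \frac{1}{e(1-x)^2}$ integrates to $1 - 1/e$ over $(0, 1-1/e)$ confirms that the continuous part together with the atom carry unit mass.

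The second step is to recall the explicit minimum-revenue symmetric CCE of the two-bidder unit-value first-price auction from~\cite{feldman2016correlated} and to write down its marginal bid CDF. The proof then reduces to checking that this marginal agrees with $H$ on the three counts above---identical functional form, identical atom $1/e$ at zero, and identical support endpoint $1 - 1/e$---which is exactly the content of the observation, since ``bid distribution'' here refers to the (common) marginal over an individual agent's bid.

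The step I expect to require the most care---though the observation is ultimately a verification---is ensuring that the objects being equated are the right ones. The equilibrium of Lemma~\ref{thm:lemma-first-price-NE-with-eps-payment} is a \emph{product} distribution (players randomize independently at a Nash equilibrium of the payment game), whereas the CCE of~\cite{feldman2016correlated} is a genuinely \emph{correlated} joint law over bid profiles, engineered to minimize the expected winning bid; the two coincide only at the level of \emph{marginals}, and I would state this explicitly to avoid conflating the joint distributions. I would also flag that $\eta = 1/e$ is not the revenue-optimal choice $\eta = v_2/2 = 1/2$ from Lemma~\ref{thm:lemma-first-price-NE-utilities}: the observation simply isolates the single member of the $\eta$-parameterized family whose marginals reproduce the collusive CCE of~\cite{feldman2016correlated}, and $1/e$ is precisely the value for which the atom $\eta$ at zero and the support cap $1-\eta$ simultaneously line up with that construction. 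A final consistency check worth making is that the atom at zero and tie-breaking at the origin are treated the same way in both the payment-game equilibrium and the CCE, so that the marginals match on the nose rather than up to a point mass.
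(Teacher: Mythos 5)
Your proposal is correct and matches the paper's (implicit) argument: the paper states this as an observation to be verified by exactly the substitution you perform, setting $v_1=v_2=1$, $\eta=1/e$ in Lemma~\ref{thm:lemma-first-price-NE-with-eps-payment} to get the common marginal $\frac{1}{e(1-x)}$ with atom $1/e$ at zero and support $[0,1-1/e]$, which is the marginal of the minimum-revenue CCE of \cite{feldman2016correlated}. Your added caveats---that equality holds at the level of marginals (product law versus correlated joint law), and that $\eta=1/e$ is not the payment-game-optimal $\eta=v_2/2$---are exactly the qualifications the paper itself makes in the surrounding text.
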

Notice, however, that in our game the players'
utilities are different from those of the CCE without payments. In particular, the payment $\eta$ that
results in this CCE distribution of play 
is not optimal for player $1$, so it is not an equilibrium of the
payment-policy game.    

\vspace{5pt}
The lemmas above then allow us to complete the proof of Theorem \ref{thm:first-price-cooperative-outcomes} by establishing a lower bound on the utility of the high-valued player in any Nash equilibrium of the payment game. The idea is to use the fact that the game analyzed in our proof is induced by a unilateral payment policy applied by the high-valued player. In any equilibrium, the player $1$ can consider the deviation in which she rejects (pays back) all the payments she received and makes this unilateral payment, which leads to a utility of {\small $v_1 - v_2 + v_2^2/4v_1$}. See Appendix \ref{sec:appnedix-Thm2-proof} for further details.
 
\bfpar{Welfare and revenue loss} To evaluate the potential loss in welfare in the game considered above (since sometimes a player with a lower value wins),
we can analyze the winning frequencies from Lemma
\ref{thm:lemma-first-price-NE-utilities}. The reduction in welfare is equal to the difference $v_1 - v_2$ multiplied by the
fraction of rounds in which the low-valued agent wins the auction. This welfare reduction is maximized, as can be verified by numerical calculation, when $v_2/v_1 \approx 0.50959...$, at which point the reduction is approximately $8.96\%$ of the full welfare. 
When considering the reduction in revenue, since the support of the bids is between zero and $v_2/2$, almost all the payments to the auctioneer are strictly less than $v_2/2$, so the revenue is reduced to less than half the revenue of the outcome of the game without payments. Thus, the players manage to improve their utilities at the expense of the auctioneer, with a relatively small loss in efficiency.

\section{Manipulations by One Player}\label{sec:single-player-manipulations}
We now turn our attention to general games and focus on the instability of not using payments, by analyzing the effects of the simplest type 
of manipulations: payment policies where only a single agent pays. The question we aim to answer here is {\em when does a player have an incentive to use payments with  learning agents?} By making sufficiently high payments to all the other agents, clearly, an agent could induce any pure outcome in the game.
The main questions are thus not about the power of payment policies,
but about when are such policies also profitable, and what are the implications of such profitable payment policies for one player on the welfare of other players.

\bfpar{Notation}
Denote the welfare of an outcome $s$ by $w(s)$ and the
best-response utility of player $i$ to action profile $s$ by
{\small $u_i^{BR}(s) = \max_{s_i' \in S_i} u_i(s_i',s_{-i})$}. For the
following analysis, it is useful to define a notion of regret that measures the opportunity cost per step for player $i$ of playing the game where the strategy profile $x$ is played compared to the benchmark of her best action in hindsight in the (different) game where $y$ (rather than $x$) is played by the other players. To avoid confusion with standard  regret (which considers the benchmark of the best response in hindsight to $x$), we call this ``comparative regret.'' Intuitively, it measures the extent to which a player regrets not incentivizing others to play $y$ instead of $x$. That is, this quantity captures how much player $i$ would benefit, in hindsight, from having induced others to play differently.

\begin{definition}
    Let $x,y$ be two joint distributions on the players' joint action space $S$. The \emph{comparative regret} of player $i$ under the distribution of play $x$ compared to best responding to an alternative distribution of play $y$ is $R_i(x,y) = u_i^{BR}(y) - u_i(x)$.    
\end{definition}   

At time $T$, we have an expected distribution of play $x(T)$ with
welfare $w(x(T))$. We think of $x(T)$ as the distribution obtained by the agents without payments, and so $x(T)$ approaches the set of
CCEs of $\Gamma$ as $T \rightarrow \infty$. To
simplify notation, we omit the dependence on $T$ and discuss it
only when necessary. The optimal welfare is denoted $OPT = \max_{s\in S} w(s)$. 

\vspace{5pt}
The following theorem characterizes the cases in which there is a player in the game who can gain from incentivizing the other agents to reach the optimal-welfare outcome.  

\begin{theorem}\label{thm:single-player-optimal-welfare}
    Fix a finite game $\Gamma$ and a set $A$ of regret-minimizing agents. Let the optimal welfare outcome\footnote{To simplify the analysis, we consider the generic case of a unique optimal outcome. In  games with equalities in utility, this can be obtained by considering infinitesimal perturbations of the utilities.} of \ $\Gamma$ be $y^*$. If $y^*$ is a Nash equilibrium, then there is a player $i$ such that $i$ increases her own payoff and the 
    total social welfare  
    of all players by using a payment policy $p_i$ that induces $y^*$ as the unique long-term outcome of the agents' dynamics. If $y^*$ is not a Nash equilibrium, the same holds if there exists a player $i$ for which the welfare gap $OPT - w(x)$ is greater than $\sum_{j \neq i} R_j(x,y^*)$.  
\end{theorem}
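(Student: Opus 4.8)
The plan is to exhibit, for an appropriately chosen player $i$, an explicit single-player payment policy that steers the dynamics to $y^*$, and then verify profitability through a transfer-accounting identity. The policy I would use combines a reward and a penalty, in the spirit of the second-price construction: whenever player $i$'s agent plays $y_i^*$, it pays every other agent $j$ a bonus $c_j = u_j^{BR}(y^*) - u_j(y^*) + \delta$ (for small $\delta > 0$) if the realized profile is exactly $y^*$; and whenever $i$'s agent plays any action other than $y_i^*$, it pays each other agent the maximal feasible amount. For large enough penalty the action $y_i^*$ is strictly dominant for agent $i$, so a regret-minimizing agent $i$ plays $y_i^*$ in the limit (regret minimization plays a strictly dominated action with vanishing frequency); and given that $i$ plays $y_i^*$, the bonus $c_j$ makes $y_j^*$ a strict best response for each $j$ to $y_{-j}^*$ (playing $y_j^*$ yields $u_j^{BR}(y^*)+\delta$, deviating yields at most $u_j^{BR}(y^*)$), so $y^*$ is a strict Nash equilibrium of the modified game.

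The core of the argument is then a clean accounting identity. Since all payments are transfers among the players, they cancel in the sum of payoffs, so once the dynamics reach $y^*$ the total welfare of the players equals $w(y^*)=OPT$, which strictly exceeds $w(x)$ (using that $y^*$ is the unique welfare optimum and that the no-payment dynamics $x$ do not already achieve it); this gives the social-welfare improvement immediately. For player $i$'s own payoff, in the limit she collects $u_i(y^*)$ from the game and pays out $\sum_{j\neq i} c_j$, so her net gain over the no-payment benchmark $u_i(x)$ is $u_i(y^*)-u_i(x)-\sum_{j\neq i}(u_j^{BR}(y^*)-u_j(y^*))$, up to the vanishing $\delta$ terms. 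Rearranging and substituting $R_j(x,y^*)=u_j^{BR}(y^*)-u_j(x)$, this net gain equals $OPT - w(x) - \sum_{j\neq i}R_j(x,y^*)$. Hence player $i$ strictly benefits exactly when $OPT-w(x)>\sum_{j\neq i}R_j(x,y^*)$, proving the non-Nash case. For the Nash case, I would observe that $y^*$ being a Nash equilibrium forces $u_j^{BR}(y^*)=u_j(y^*)$, so every bonus collapses to the negligible $\delta$ and $R_j(x,y^*)=u_j(y^*)-u_j(x)$; player $i$'s net gain reduces to $u_i(y^*)-u_i(x)$, and since $\sum_i (u_i(y^*)-u_i(x))=OPT-w(x)>0$ some player $i$ must have $u_i(y^*)>u_i(x)$ and thus profits.

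I expect the main obstacle to be establishing that the constructed policy induces $y^*$ as the \emph{unique} long-term outcome, rather than merely as one equilibrium, while keeping the payments at the cheap ``Nash-temptation'' level $c_j\approx u_j^{BR}(y^*)-u_j(y^*)$ on which the accounting above depends. The penalty pins down agent $i$'s action by strict dominance, which regret minimization respects, but after fixing $s_i=y_i^*$ the reduced game among the remaining agents has $y^*$ only as a \emph{strict Nash} equilibrium, whereas regret-minimizing dynamics converge a priori to the (possibly larger) set of coarse correlated equilibria. I would close this gap by arguing that each $y_j^*$ can be made the unique surviving action of the dynamics---using genericity of the unique optimum and, if necessary, an iterated-dominance ordering over the players to strengthen the bonuses just enough to rule out every other profile---and then checking that this strengthening does not raise the limiting payment above the level used in the identity, so that the benefit condition is preserved. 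Care is also needed in the boundary case $w(x)=OPT$, where no strict improvement is available; this is excluded by the genericity assumption that $y^*$ is the unique optimum together with the (implicit) hypothesis that the no-payment dynamics have not already converged to it.
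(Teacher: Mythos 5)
Your accounting identity and the Nash-case summation argument are exactly right, and they match the paper's own calculation: the on-path cost target $\sum_{j\neq i}\bigl(u_j^{BR}(y^*)-u_j(y^*)\bigr)$ and the rearrangement of player $i$'s net gain into $OPT - w(x) - \sum_{j\neq i}R_j(x,y^*)$ are precisely what the paper derives via its lemmas ($cost_i(y) = k(y) - R_i(y,y) + o(1)$ and the condition $w(y)-w(x) > \sum_{j\neq i}R_j(x,y)$). The genuine gap is in the construction, and it is the one you flagged: paying the bonus $c_j$ only when the realized profile is \emph{exactly} $y^*$ makes $y^*$ a strict Nash equilibrium of the modified game, but not its unique coarse correlated equilibrium. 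Once $y_i^*$ is pinned down by your penalty, the remaining agents can sit in a CCE of the reduced game that stays away from $y^*_{-i}$ (for instance, another equilibrium differing from $y^*_{-i}$ in two or more coordinates); in such a CCE the bonus is never paid, no agent accrues regret with respect to deviating to $y_j^*$ (that deviation collects the bonus only in rounds where $s_{-j}=y^*_{-j}$, which never occur), and so the no-regret property is fully consistent with never reaching $y^*$. Since the theorem requires $y^*$ to be the unique long-term outcome for \emph{arbitrary} regret-minimizing agents --- and the paper notes that with multiple CCEs such dynamics need not even converge in time average --- your policy as stated does not deliver the claim.

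Your proposed patch (strengthening the bonuses via genericity or an iterated-dominance ordering) points in the wrong direction: a bonus paid unconditionally when $j$ plays $y_j^*$ that is large enough to make $y_j^*$ dominant would have to cover $j$'s worst-case temptation $\max_{s_{-j}}\bigl(\max_{s_j}u_j(s_j,s_{-j}) - u_j(y_j^*,s_{-j})\bigr)$, and that amount is paid on path, generally exceeding the Nash-temptation level on which your identity depends. The correct fix --- which is what the paper imports from Monderer and Tennenholtz's $k$-implementation --- is to make the bonus to agent $j$ contingent on $j$ playing $y_j^*$ alone, with a \emph{profile-dependent} amount $\max_{s_j}u_j(s_j,s_{-j}) - u_j(y_j^*,s_{-j}) + \delta$. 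This makes $y_j^*$ strictly dominant for every $j\neq i$, hence $y^*$ the unique CCE, so all regret-minimizing dynamics converge to it; yet the amount actually paid at $y^*$ is still only $u_j^{BR}(y^*) - u_j(y^*) + \delta$, because the larger payments occur only off the equilibrium path, where regret minimization guarantees they are incurred with vanishing frequency. With that single change to how the bonus is conditioned, the rest of your argument goes through verbatim.
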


\bfpar{Analysis}
To prove the theorem, we begin by establishing two lemmas that
quantify the costs and gains for a player of pushing the agents'
dynamics to an arbitrary pure outcome. 

To ensure the convergence of arbitrary no-regret agents to $y^*$, the payments must induce $y^*$ as the unique CCE,\footnote{With multiple CCEs, no-regret dynamics may fail to converge even in the time average \cite{KolumbusN22}.} which implies that it is also the unique Nash equilibrium. 
Since we consider $y^*$ to be an optimal welfare outcome, we may take $y^*$ to be a pure outcome of the game. One way to do so is by making $y^*$ an equilibrium in dominant strategies. Although weaker requirements would be sufficient for this purpose (see, for example, Proposition \ref{thm:iterated-dominant-strategies}), using dominant strategies simplifies the analysis. Moreover, it turns out that it leads to the same payments and the same utilities for the players in the long run as inducing a unique CCE: \cite{monderer-tenenholtz-2004} analyzed a mechanism-design scenario where a designer adds exogenous payments to the game, and they observed that the actual cost of inducing an outcome $s$ as an equilibrium in dominant strategies --- which they called the optimal \emph{$k$-implementation} of $s$, $k(s)$ --- is equal to the cost of inducing that outcome as a Nash equilibrium. Combining this with the fact that an equilibrium in dominant strategies is also the unique CCE, we have that, indeed, the exogenous payments needed for inducing a unique CCE or an equilibrium in dominant strategies are the same. The intuition for this result is that further payments that are intended to assure dominant strategies are in fact made only off the equilibrium path, and are thus not actually made when the game is played. A similar effect was also used in \cite{babaioff2022optimal} in the context of providing collateral contracts to mitigate strategic risk and incentivize investments.  

This intuition also carries over, to some extent, to our endogenous setting, where one of the players uses a payment policy to manipulate the agents' dynamics. But our case is different in several ways. First, and importantly, there are no external funds injected into the game. Instead, payments that a player allows her agent to make to other agents are taken from her own payoffs. Second, as we show below, the cost of inducing an outcome $s$ as an equilibrium is lower than $k(s)$, since the player who wishes to induce $s$ as the outcome of the agents' dynamics does not make payments to her own regret-minimizing agent. Third, since in our case the learning agents do not reason about payment policies but rather learn to respond over time, some agents will play dominated actions for some time, leading to additional short-term costs. This is formalized in the following lemma. 

\begin{lemma}\label{thm:lemma-single-player-cost}
    Fix a player $i$ and pure outcome $y$. Player $i$ can incentivize $y$ as the unique long-term outcome of the dynamic with an expected cost per time step of  
    $ \ cost_i(y) = k(y) - R_i(y,y) + o(1)$.
\end{lemma}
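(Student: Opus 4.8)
The plan is to exhibit an explicit payment policy for player $i$ that turns the target profile $y$ into a strictly dominant-strategy equilibrium of the game faced by the agents, and then to read off player $i$'s realized per-step payment. The construction adapts the $k$-implementation of \cite{monderer-tenenholtz-2004} in exactly the two ways flagged in the discussion above: player $i$ never pays her \emph{own} agent---which is precisely what removes the $R_i(y,y)$ term from the cost---and she steers her own agent to $y_i$ using an off-path self-penalty in place of the reward that a mechanism designer would otherwise have had to pay it.

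First I would fix the payments to the other agents. For each $j\neq i$, let the base payment $p_{ij}(s)$ equal $\big(\max_{s_j'}u_j(s_j',s_{-j})-u_j(y_j,s_{-j})\big)+\epsilon$ when $s_j=y_j$, and $0$ otherwise, for a small $\epsilon>0$. As in the $k$-implementation, this makes $y_j$ strictly dominant for agent $j$: playing $y_j$ yields net utility $\max_{s_j'}u_j(s_j',s_{-j})+\epsilon$, strictly above any deviation, and this holds for \emph{every} $s_{-j}$, in particular for every action of agent $i$. Evaluated at the target profile, the base payment is $u_j^{BR}(y)-u_j(y)+\epsilon=R_j(y,y)+\epsilon$. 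To route agent $i$ to $y_i$ without paying it, I would add to the payment to some fixed agent $j\neq i$ a term $c\cdot\mathbf{1}[s_i\neq y_i]$ with $c$ a large but constant penalty. This term does not depend on $s_j$, so it is an additive constant in agent $j$'s own action and leaves $y_j$ dominant; for $c$ larger than the (bounded) variation of agent $i$'s net utility it makes $y_i$ strictly dominant for agent $i$. All payments stay within $[0,M]$ for $M$ chosen large enough. Crucially, the penalty is incurred only off the target path ($s_i\neq y_i$), so it contributes nothing to the realized cost at $y$.

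Next I would establish convergence and compute the cost. Since every agent now has a strictly dominant strategy, $y$ is the \emph{unique} CCE of the induced game (positive weight on any dominated action would create external regret), so the time-averaged play of any regret-minimizing agents converges to $y$; moreover, a strict dominance gap $\delta>0$ together with sublinear regret forces each agent to play off-$y$ on only an $o(1)$ fraction of the rounds. Player $i$'s cost per step is $\sum_{j\neq i}p_{ij}(s^t)$, and averaging over rounds (using that play concentrates on $y$) the long-run average converges to $\sum_{j\neq i}p_{ij}(y)=\sum_{j\neq i}\big(R_j(y,y)+\epsilon\big)$. Using the $k$-implementation identity $k(y)=\sum_{j}R_j(y,y)$ (the minimum realized payment making each $y_j$ an equilibrium action is $R_j(y,y)$), this equals $k(y)-R_i(y,y)+(n-1)\epsilon$. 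The $o(1)$ fraction of transient rounds each cost at most $(n-1)M=O(1)$ and hence contribute $o(1)$ on average; letting $\epsilon$ be arbitrarily small yields $cost_i(y)=k(y)-R_i(y,y)+o(1)$.

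The step I expect to be the main obstacle is decoupling the two incentive goals. The payments that pin down the other agents' actions unavoidably depend on $s_i$ (since $s_i$ is part of $s_{-j}$), so agent $i$ cannot simply be steered by reusing them; and any reward paid \emph{to} agent $i$ would reintroduce exactly the $R_i(y,y)$ we are trying to avoid. The self-penalty resolves this because it is additively separable in the other agents' own actions---preserving their dominant strategies---while being off-path at $y$, hence free in the limit. The secondary point that needs care is that the learning-phase cost is genuinely $o(1)$: this relies on the dominance gaps being bounded away from $0$ (fixed $\epsilon,c$), so that sublinear regret yields an $o(1)$ frequency of dominated play, and on the bounded-payment assumption $p_{ij}\le M$, so that each transient round costs only $O(1)$.
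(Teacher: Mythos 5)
Your proof is correct and follows essentially the same route as the paper's: both adapt the Monderer--Tennenholtz $k$-implementation by dropping the infeasible payments to agent $i$ (which is exactly what produces the $-R_i(y,y)$ term), steer agent $i$ via an off-path self-penalty paid to other agents, and show the learning-phase cost is $o(1)$ because strict dominance plus sublinear regret forces dominated actions to be played on a vanishing fraction of rounds. Your version is in fact slightly more explicit than the paper's on two points it glosses over---the $\epsilon$-bump ensuring \emph{strict} dominance for the agents $j \neq i$, and the requirement that the penalty $c$ exceed the variation of agent $i$'s net utility \emph{including} the $s_i$-dependence of the base payments---but these are refinements of, not departures from, the same argument.
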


\begin{proof} (Lemma \ref{thm:lemma-single-player-cost}). 
We start constructing a payment policy for the agent of player $i$ by looking at the optimal $k$-implementation payments for inducing $y$ as a dominant strategy equilibrium; we will then adjust this payment profile to fit our setting in which there are no external payments and to minimize player $i$'s cost. A key point is that the payment amount $k(y)$ includes payments to agent $i$ that an external designer would have made for inducing $y_i$ as the dominant strategy for agent $i$. These payments are not feasible and not needed in our case. Instead, we change the payment policy of agent $i$ in the following way: for any action $s_i \neq y_i$, agent $i$ makes a payment of $\max_{s\in S}u_i(s) + 1$, and distributes this payment equally to the other agents. These payments make all the other actions of agent $i$ strictly dominated by the action $y_i$. Thus, to calculate the cost for player $i$, we need to subtract from the $k$-implementation payments the payments made to player $i$, $R_i(y, y)$, and add a learning-phase error term $\tilde{c}$ that results from the additional payments that agent $i$ makes when playing its dominated strategies, as well as from other times during the learning dynamics when the other agents play their dominated strategies. 
\begin{claim}
    The expected learning phase cost per step $\tilde{c}$ vanishes in the long term, i.e., $\tilde{c} = o(1)$.
\end{claim}
\begin{proof}
    $\tilde{c}$ is the expected total payment per time step that player $i$ makes when  agents play strictly dominated strategies. Assume for contradiction that $\lim_{T\rightarrow \infty} \tilde{c} > 0$, or that the limit does not exist. Both cases imply that the frequency at which dominated strategies are played in the agents' dynamic is not vanishing, which contradicts the regret-minimization property of the agents.  
\end{proof}
It follows that $ cost_i(y) = k(y) - R_i(y,y) + o(1)$ as $T \rightarrow \infty$, as stated in the lemma.
\end{proof}

The next lemma specifies the condition under which, given a distribution $x$ to which the no-regret agents converge in expectation in $T$ rounds, there is a player who can use a payment policy to increase her payoff by pushing the dynamic to a different outcome.  
\begin{lemma}\label{thm:lemma-single-player-conditions}
    Fix a finite game $\Gamma$ and a set of regret-minimizing agents for the players. If there exists a player $i$ and outcome $y$ such that  
    $w(y) - w(x) > \sum_{j \neq i} R_j(x,y)$, then $i$  can increase her payoff by making payments that induce $y$ as the unique outcome of the agents' dynamics. 
\end{lemma}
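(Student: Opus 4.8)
The plan is to compare player $i$'s long-run payoff under a suitable payment policy against her no-payment payoff $u_i(x)$, and to show that the difference is positive exactly when the stated inequality holds. The key input is Lemma~\ref{thm:lemma-single-player-cost}: player $i$ can force the agents' dynamics to converge to the unique pure outcome $y$ at a per-step cost $cost_i(y) = k(y) - R_i(y,y) + o(1)$. Under this policy she realizes the game utility $u_i(y)$ minus her payments, so her net long-run payoff is $u_i(y) - cost_i(y) = u_i(y) - k(y) + R_i(y,y) - o(1)$. Without payments the agents converge in expectation to $x$, giving her $u_i(x)$, so the deviation is strictly profitable precisely when $u_i(y) - k(y) + R_i(y,y) - u_i(x) > o(1)$.

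First I would rewrite $k(y)$ in terms of comparative regret. The quantity $k(y)$ is the realized cost of the optimal $k$-implementation of $y$; by the characterization recalled before Lemma~\ref{thm:lemma-single-player-cost} (the on-path cost of making $y$ a dominant-strategy equilibrium equals the Nash-implementation cost, because the dominance-enforcing payments are made only off the equilibrium path), this on-path cost equals $\sum_{j} \big(u_j^{BR}(y) - u_j(y)\big) = \sum_{j} R_j(y,y)$, where each term is nonnegative since a best response is never worse than $y_j$. Substituting $k(y) = \sum_{j} R_j(y,y)$ into the cost from Lemma~\ref{thm:lemma-single-player-cost} yields $cost_i(y) = \sum_{j \neq i} R_j(y,y) + o(1)$, matching the intuition that player $i$ need not compensate her own agent.

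It then remains to convert the profitability condition into the welfare form stated in the lemma. Expanding $w(y) = \sum_j u_j(y)$, $w(x) = \sum_j u_j(x)$, and $R_j(x,y) = u_j^{BR}(y) - u_j(x)$, a direct cancellation of the terms $\sum_{j\neq i} u_j(x)$ and the other players' game utilities gives the identity
\begin{equation*}
w(y) - w(x) - \sum_{j \neq i} R_j(x,y) = u_i(y) - u_i(x) - \sum_{j \neq i} R_j(y,y).
\end{equation*}
The right-hand side is exactly player $i$'s net gain from the deviation (up to the vanishing $o(1)$), so the hypothesis $w(y) - w(x) > \sum_{j \neq i} R_j(x,y)$ guarantees a strictly positive improvement for all sufficiently large $T$, which is the claim.

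The computation is essentially mechanical once Lemma~\ref{thm:lemma-single-player-cost} is available; the only steps requiring care are the identification $k(y) = \sum_j R_j(y,y)$, which is where the equivalence between dominant-strategy and Nash implementation costs is used, and the clean setup of the comparison — the status quo payoff is $u_i(x)$ under the no-payment CCE $x$, while the deviation replaces the entire distribution $x$ by the pure profile $y$, so that both the welfare change and the compensation to the other agents are evaluated at $y$.
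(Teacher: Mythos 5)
Your proof is correct and follows essentially the same route as the paper: both invoke Lemma~\ref{thm:lemma-single-player-cost} for the deviation cost, use the Monderer--Tennenholtz identity $k(y)=\sum_j u_j^{BR}(y)-w(y)$ (which you equivalently write as $\sum_j R_j(y,y)$), and rearrange the profitability condition $u_i(y)-k(y)+R_i(y,y)>u_i(x)$ into the stated welfare-gap inequality. The only difference is organizational --- you substitute for $k(y)$ first and verify the resulting identity, while the paper rearranges directly --- so the two arguments are the same algebra.
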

\noindent
Intuitively, the increase in welfare must be large enough to compensate the other players for their regret for playing $x$ compared to their best responses to the alternative outcome $y$. Before proving the lemma, we have the following corollary. 

\begin{corollary}\label{cor:single-player-deviation-to-better-NE}
   If the dynamics without payments approach a welfare $w(x)$ lower than the welfare of some Nash equilibrium $y$, then, since in equilibrium $R_i(y,y) = 0$, there is a player who can increase her payoff by using a payment policy that induces $y$ as the long-term outcome of the dynamics. 
\end{corollary}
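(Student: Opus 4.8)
The plan is to verify the hypothesis of Lemma~\ref{thm:lemma-single-player-conditions} for a suitable player and then invoke that lemma directly. Lemma~\ref{thm:lemma-single-player-conditions} guarantees that a player $i$ can profitably push the dynamics to $y$ whenever $w(y) - w(x) > \sum_{j \neq i} R_j(x,y)$, so the entire task reduces to exhibiting one player $i$ for whom this inequality holds when $y$ is a Nash equilibrium.

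The key simplification comes from the Nash-equilibrium assumption. Since each action $y_j$ is a best response to $y_{-j}$, we have $u_j^{BR}(y) = u_j(y)$, i.e.\ $R_j(y,y) = 0$ for every $j$. Using the definition $R_j(x,y) = u_j^{BR}(y) - u_j(x)$, this lets me rewrite each comparative regret against $y$ purely in terms of realized utilities: $R_j(x,y) = u_j(y) - u_j(x)$. Summing over $j \neq i$ and using $w = \sum_j u_j$ gives $\sum_{j\ne i} R_j(x,y) = \big(w(y) - w(x)\big) - \big(u_i(y) - u_i(x)\big)$.

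Substituting this into the hypothesis of Lemma~\ref{thm:lemma-single-player-conditions}, the condition $w(y) - w(x) > \sum_{j\ne i} R_j(x,y)$ collapses to the single inequality $u_i(y) > u_i(x)$. It therefore suffices to find a player whose individual utility is strictly higher at $y$ than at $x$, which I would obtain by an averaging argument: since $\sum_i \big(u_i(y) - u_i(x)\big) = w(y) - w(x) > 0$ by hypothesis, at least one player $i^*$ satisfies $u_{i^*}(y) > u_{i^*}(x)$ (for $T$ large enough that $w(x(T)) < w(y)$). Applying Lemma~\ref{thm:lemma-single-player-conditions} to this player $i^*$ then yields a payment policy inducing $y$ as the unique long-term outcome that strictly increases her payoff.

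I do not expect a genuine obstacle, since all the quantitative work already lives in Lemmas~\ref{thm:lemma-single-player-cost} and~\ref{thm:lemma-single-player-conditions}. The only point requiring care is the reduction of comparative regret at a Nash equilibrium: one must apply $u_j^{BR}(y) = u_j(y)$ consistently so that the externality term $\sum_{j\ne i} R_j(x,y)$, which measures how much the \emph{other} agents must be compensated, telescopes against the welfare gap and leaves exactly player $i$'s own utility change. Equivalently, in the cost language of Lemma~\ref{thm:lemma-single-player-cost}, $R_j(y,y)=0$ for all $j$ forces the implementation cost $k(y)$ (equal to the cost of making $y$ a Nash equilibrium, which is zero since $y$ already is one) to vanish, so the on-path cost is $cost_{i^*}(y) = k(y) - R_{i^*}(y,y) + o(1) = o(1)$. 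The beneficiary then collects her higher equilibrium utility $u_{i^*}(y)$ up to vanishing transient terms, while the extra payments that enforce uniqueness are made only off the equilibrium path and contribute $o(1)$.
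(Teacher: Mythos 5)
Your proof is correct and takes essentially the same route as the paper: the corollary is presented there as a direct specialization of Lemma~\ref{thm:lemma-single-player-conditions}, and your reduction of the condition $w(y)-w(x) > \sum_{j\neq i} R_j(x,y)$ to $u_i(y) > u_i(x)$ via $R_j(x,y) = u_j(y) - u_j(x)$ at a Nash equilibrium, combined with the averaging argument to exhibit such a player, is exactly the argument the paper itself deploys (in the equivalent cost form, $k(y)=0$ so $cost_i(y)=o(1)$) in the Nash-equilibrium case of Theorem~\ref{thm:single-player-optimal-welfare}. No gaps.
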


\begin{proof}(Lemma \ref{thm:lemma-single-player-conditions}).
    We need to specify the requirement that player $i$ will increase her own utility after making the payments that induce $y$ as the unique outcome of the dynamics. 
    That is, $cost_i(y) < u_i(y) - u_i(x)$.
    By Lemma \ref{thm:lemma-single-player-cost} we have that for large enough $T$, player $i$ prefers making such payments if 
    $u_i(y) - k(y) + R_i(y,y) > u_i(x)$. 
    Rewriting the left terms we have 
    $$
    u_i^{BR}(y) - k(y) > u_i(x) = w(x) - \sum_{j \neq i} u_j(x).
    $$
    Using the fact that by definition of $k(y)$ \cite{monderer-tenenholtz-2004}, it can be expressed as $$k(y)=\big[ \sum_{j} u_j^{BR}(y)\big] - w(y),$$ 
    we can rewrite the left-hand side as
    $
    w(y) - \sum_{j \neq i} u_j^{BR}(y).
    $
    Rearranging the terms, we have 
    $$
    w(y) - w(x) > \sum_{j \neq i} u_j^{BR}(y) - \sum_{j \neq i} u_j(x) = \sum_{j \neq i} R_j(x,y). 
    $$  
\end{proof}%
Note that all the steps in the proof of Lemma~\ref{thm:lemma-single-player-conditions} hold in either direction. Thus, the lemma, and consequently the result of Theorem~\ref{thm:single-player-optimal-welfare} for the case where the optimal welfare outcome is not an equilibrium, also hold in the other direction. That is, if player $i$ can increase her payoff by making payments that induce $y$ as the unique outcome, then $
w(y) - w(x) > \sum_{j \neq i} R_j(x, y).
$

\vspace{5pt}
\noindent
The lemmas above provide the basis for proving Theorem \ref{thm:single-player-optimal-welfare}.

\begin{proof}(Theorem \ref{thm:single-player-optimal-welfare}).
If the dynamics of the players'  regret-minimizing agents converge to the optimal-welfare outcome, $lim_{T \rightarrow \infty} x = y^*$, then there is no room for improvement and the theorem holds trivially (although weakly) with payments of zero. In the following, we assume for simplicity that $w(x)$ is bounded away\footnote{There is also the scenario where $x = y^*$ for some values of $T$ but not in the limit. For such values of $T$ the theorem holds trivially as well, and so the assumption is without loss of generality.} from $w(y^*)$, that is, there exists $\epsilon > 0$ and $T_0$ such that for all $T>T_0$, we have that $w(x) < w(y^*)(1 - \epsilon)$. 

Suppose that $y^*$ is a Nash equilibrium. By  definition, if some profile $s$ is a Nash equilibrium, then $R_i(s,s) = 0$ for all $i$. Thus, by Lemma \ref{thm:lemma-single-player-cost}, any player can construct a payment policy that induces $y^*$ as the unique outcome of the agents' dynamic with a cost-per-time-step of $o(1)$. 

Therefore, to have a player $i$ who increases her payoff by inducing $y^*$ as the outcome of the dynamics, we only need to have $u_i(y^*) > u_i(x)$. Assume for the purpose of contradiction that for all $i$, $u_i(x) \geq u_i(y^*)$. Summing over the players, we have $\sum_i u_i(x) \geq \sum_i u_i(y^*)$; that is, $w(x) \geq w(y^*)$, a contradiction to the optimality of $y^*$. Thus, we have that for $T$ large enough such that the learning-phase costs are small, there exists a player $i$ who can increase her payoff by inducing the optimal-welfare outcome as the outcome of the agents' dynamics. 
If $y^*$ is a non-equilibrium outcome, the conditions for $i$ increasing her payoff are given by Lemma \ref{thm:lemma-single-player-conditions},  setting $y = y^*$.
Notice that when the player $i$ adopts this strategy to selfishly improve her own payoff, the resulting dynamics converge to a high-welfare outcome that is on the Pareto frontier of the original game.
\end{proof}

From the analysis above, we observe that in many cases, players can gain from using payment policies to divert the agents' dynamics to more favorable outcomes for them.
The following theorem shows that this is true for a broad class of games.   
For games with a price of stability (PoS) of 1, this is already implied by Theorem
\ref{thm:single-player-optimal-welfare}, but the following result is
more general.  

\begin{theorem}\label{thm:single-player-stability}
    Any finite game in which there exists a CCE with lower welfare than the best 
    pure Nash equilibrium is not stable. This holds, in particular, if $PoA\neq PoS$ in pure strategies. 
\end{theorem}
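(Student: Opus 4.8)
The plan is to reduce the claim to Corollary~\ref{cor:single-player-deviation-to-better-NE}. By Definition~\ref{def:stable-games}, to prove that $\Gamma$ is not stable it suffices to produce a \emph{single} set $A$ of regret-minimizing agents for which zero payments is not an equilibrium of the associated payment game; I would therefore not reason about all agent sets simultaneously, but construct one adversarial set matched to the low-welfare equilibrium.

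Let $x$ be a CCE with $w(x)$ strictly below the welfare $w(y)$ of the best pure Nash equilibrium $y$, as the hypothesis provides. First I would exhibit a set $A$ of regret-minimizing agents whose no-payment dynamics have expected distribution of play approaching $x$ as $T \to \infty$. Given such $A$, the no-payment dynamics approach welfare $w(x) < w(y)$, which is exactly the premise of Corollary~\ref{cor:single-player-deviation-to-better-NE}. Since $y$ is a Nash equilibrium we have $R_i(y,y)=0$ for all $i$, so by Lemma~\ref{thm:lemma-single-player-cost} the per-step cost of inducing $y$ as the unique long-run outcome is $o(1)$. Summing the inequality $w(y) > w(x)$ over players yields some player $i$ with $u_i(y) > u_i(x)$, and for large $T$ her net payoff $u_i(y) - o(1)$ from deploying the payment policy of Lemma~\ref{thm:lemma-single-player-cost} strictly exceeds her zero-payment payoff $u_i(x)$. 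This deviation is valid against the chosen $A$ because that payment policy makes $y$ a dominant-strategy equilibrium of the game with payments, so any regret-minimizing agents converge to it. Hence zero payments is not an equilibrium for $A$, and $\Gamma$ is not stable.

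For the ``in particular'' clause, note that $PoA \neq PoS$ in pure strategies means the worst- and best-welfare pure Nash equilibria differ, so there are pure equilibria $y_{\mathrm{lo}}, y_{\mathrm{hi}}$ with $w(y_{\mathrm{lo}}) < w(y_{\mathrm{hi}})$. Every pure Nash equilibrium is a CCE, so $y_{\mathrm{lo}}$ is a CCE with welfare strictly below the best pure Nash equilibrium, and the first part applies. Here the agent-set construction is immediate: letting every agent always play its $y_{\mathrm{lo}}$-action is a zero-regret dynamic (constant play of a Nash equilibrium incurs no regret), whose distribution of play is exactly $y_{\mathrm{lo}}$.

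The main obstacle is the second step in the general case: realizing an \emph{arbitrary} low-welfare CCE $x$ as the long-run distribution of some admissible no-regret dynamics, rather than merely knowing $x$ lies in the CCE set. I would handle this via the standard correspondence between coarse correlated equilibria and the empirical distributions attainable by no-external-regret learners, emphasizing that non-stability requires only \emph{one} such adversarial agent set. Once $A$ is fixed, the remainder is a mechanical application of the already-established Lemma~\ref{thm:lemma-single-player-cost} and Corollary~\ref{cor:single-player-deviation-to-better-NE}.
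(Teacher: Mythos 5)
Your general-case argument is essentially the paper's own proof: both reduce the claim to Corollary~\ref{cor:single-player-deviation-to-better-NE} (with Lemma~\ref{thm:lemma-single-player-cost} supplying the $o(1)$ cost of inducing the Nash equilibrium $y$), and both obtain the adversarial agent set from the known result that for any CCE there exist regret-minimizing agents whose no-payment dynamics converge to it --- the paper cites \cite{KolumbusN22,monnot2017limits} for exactly this ``standard correspondence'' you invoke. Your reading of Definition~\ref{def:stable-games} (non-stability requires only one admissible agent set) matches the paper's strategy, and extracting a player $i$ with $u_i(y) > u_i(x)$ from $w(y) > w(x)$ by summing over players is sound.

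The gap is in your shortcut construction for the ``in particular'' clause. An agent hard-coded to always play its $y_{\mathrm{lo}}$-action is not a regret-minimizing agent under the paper's definition (Appendix~\ref{sec:appendix-definitions}): that definition demands sublinear regret against \emph{every} sequence of opponent play, and in the payment game regret is measured with respect to agent-utilities that include payments. Constant play has zero regret only on the equilibrium path of the no-payment game. Worse, the flaw is not merely one of admissibility: hard-coded agents ignore payments entirely, so no payment policy can steer them away from $y_{\mathrm{lo}}$; against this agent set zero payments actually \emph{is} an equilibrium, and your deviation argument --- which relies on the other agents' regret-minimization forcing convergence once $y$ is made a dominant-strategy profile --- fails precisely for the agents you constructed. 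The fix is to drop the special case: since $y_{\mathrm{lo}}$ is a CCE, the same construction you use in the general case (agents follow the prescribed play while their realized regret stays sublinear and revert to a genuine no-regret algorithm otherwise) yields admissible agents that converge to $y_{\mathrm{lo}}$ under zero payments yet do respond to a deviator's payments, and the general argument then applies verbatim.
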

\begin{remark}
    The converse is not true in general: there are games with $PoA=PoS$ that are not stable, so players prefer to use payments. One such example is the prisoner's dilemma game presented in the introduction  
    in Figure 
    \ref{fig:PD-game-a}.
\end{remark}
\begin{proof}
The result follows from combining the proof of Lemma \ref{thm:lemma-single-player-conditions} with general properties of regret-minimization dynamics.
Let $\Gamma$ be a game in which the Nash equilibrium with the highest welfare, denoted $\Tilde{y}$, yields a welfare of $\Tilde{w}$ (where the welfare in equilibrium $\Tilde{w}$ is not necessarily equal to the optimal welfare $OPT$), and there exists a CCE $x$ with welfare $w(x) < \Tilde{w}$.

It is known that for any CCE of a game, there are regret-minimizing agents that converge to that equilibrium \cite{KolumbusN22,monnot2017limits}. In particular, there are agents that converge arbitrarily close to $x$.\footnote{The proofs of this convergence result use cycles of pure actions to approximate the time average of joint distributions, and the arbitrarily close approximation is due to the density of the rational numbers. We disregard this detail in our analysis.} By Lemma \ref{thm:lemma-single-player-conditions} (in particular, using Corollary \ref{cor:single-player-deviation-to-better-NE}), there exists a player who prefers using a payment policy that assures convergence to the highest welfare equilibrium $\Tilde{y}$ compared to not making payments and reaching outcome $x$. Thus, we have a utility-improving deviation over the no-payments action profile and hence the game is not stable. 
\end{proof}

\section{Two-Player Games}\label{sec:two-player-games}

In the prisoner's dilemma example presented in Section \ref{sec:intro}, we saw that the players had  
incentives to use payments between their agents, and in equilibria of the payment game, both players had 
higher gains than in the game without payments.
We now show that these phenomena apply quite generally to bilateral interactions where the players use learning agents. 

\begin{theorem}\label{thm:two-players-Pareto-improvement}
    Every equilibrium of a two-player payment-policy game is a  weak Pareto improvement over the no-payments outcome of this game. 
\end{theorem}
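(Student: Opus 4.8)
The plan is to exploit a feature special to bilateral interaction: in a two-player game each player can unilaterally \emph{cancel} the net effect of the other player's payments, and thereby reproduce exactly the no-payments outcome as an available deviation. Fix any equilibrium $(p_1,p_2)$ of the payment game and let $(U_1,U_2)$ be the players' long-term utilities in it. Let $(U_1^0,U_2^0)$ denote the utilities in the no-payments outcome, i.e.\ the outcome reached when both players use the zero policy and the agents converge to some CCE $x$ of $\Gamma$. It suffices to show $U_i \geq U_i^0$ for each $i$, since this is precisely the assertion that the equilibrium weakly Pareto dominates the no-payments outcome.

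First I would fix a player $i$ (with $j$ the other player) and consider the deviation in which $i$ replaces $p_i$ by the \emph{refund policy} $p_{ij}'(s) = p_{ji}(s)$ for every action profile $s$. Since $p_{ji}(s)\in[0,M]$ this is a feasible policy, and it makes the net transfer on every profile vanish: agent $i$'s effective utility becomes $v_i(s) = u_i(s) + p_{ji}(s) - p_{ij}'(s) = u_i(s)$, and symmetrically $v_j(s) = u_j(s)$. Hence under this deviation the agents face exactly the underlying game $\Gamma$, their learning dynamics coincide with the no-payments dynamics, and they converge to the same outcome $x$.

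The second step is to evaluate player $i$'s utility under this deviation. Because the net payment is zero on every profile, the only transfers that occur are the gross payment from $j$ and its exact refund by $i$, which cancel; the sole genuine cost is the transient payment during the learning phase, which is $o(1)$ and vanishes as $T\to\infty$ (exactly as in Lemma~\ref{thm:lemma-single-player-cost}). Thus in the limit the refund deviation yields player $i$ precisely $U_i^0$. By the equilibrium condition, $U_i \geq U_i^0$ (for an $\epsilon$-equilibrium, $U_i \geq U_i^0 - \epsilon$, and letting $\epsilon \to 0$ gives the weak inequality). Applying the argument to both players establishes the weak Pareto improvement. Note that deviating to the \emph{zero} policy would not suffice, since then $j$'s payments persist and the agents converge to a different CCE of $\Gamma$ whose $u_i$-component need not dominate $U_i^0$; matching rather than dropping $j$'s payments is what pins the effective game down to $\Gamma$.

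The main obstacle is justifying that zeroing out the net payments truly reproduces the particular no-payments outcome $x$, rather than some other CCE of $\Gamma$. This requires that the fixed regret-minimizing agents respond to payment policies only through the induced net-payoff functions $v_i$, so that identical effective games induce identical dynamics; I would make this explicit from the fact that the agents are algorithms whose inputs are their realized payoffs, together with the convention that vanishing transient costs do not affect limit utilities. A secondary point to handle is the indifference/tie-breaking convention: because the refund deviation produces \emph{no} net payment, the ``prefer not to pay'' rule does not erode the bound, and the lower bound $U_i^0$ is attained exactly in the limit.
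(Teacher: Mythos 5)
Your proposal is correct and follows essentially the same argument as the paper: the key deviation --- replacing one's own policy by a ``refund'' policy that exactly matches the opponent's payments (and drops all other payments), so that the agents face the net payoffs of $\Gamma$ itself and reproduce the no-payments dynamics --- is precisely the deviation used in the paper's proof, with the equilibrium condition then forcing $U_i \geq U_i^0$ for each player. Your explicit handling of the transient costs, the $\epsilon$-equilibrium limit, and the assumption that agents respond only through net payoffs are sensible refinements of points the paper treats implicitly, not a different route.
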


\begin{proof} 
The intuition for this result is that in order to have any positive payment in equilibrium, some player must benefit from making the payment, while the other benefits from receiving it. If some player is worse off with the payments, then, since there are no considerations about other players, this player always has a deviation that effectively cancels the payments. With more than two players, there may be players who cannot profitably counteract such payments made among others that do not benefit them.
Formally, consider a game $\Gamma$, a set $A$ of regret-minimizing learning agents for the players, and a time horizon $T$. 

Suppose, for the sake of contradiction, that there exists an equilibrium $p^*$ of the payment-policy game associated with ${\Gamma, A, T}$ where player $i$ has a lower utility than her expected utility over the $T$ rounds in the game without payments. Let $h$ denote the utility for player $i$ in the game without payments, and let $l$ denote the utility for player $i$ in the equilibrium $p^*$. By  our assumption, we have $h > l$. Consider the following deviation from the payment policy $p^*$: the player $i$ matches her payments to zero out all payments made by the other player and, additionally, reduces all her other payments to zero. As a result, the two agents now observe in their dynamics the utilities of the original game $\Gamma$ without payments. Therefore, player $i$ has the utility of $h$. Since player $i$ has increased her utility from $l$ to $h$, we have a contradiction to $p^*$ being an equilibrium. Hence, the utilities of both players can only increase relative to the game without payments.    
\end{proof}

The next result addresses the question of incentives to use payments in a subclass of games in which no-regret dynamics converge to a pure outcome. We see that despite the simplicity of learning in these games, and even in the presence of dominant strategies, players may still have incentives to use payment policies. 

One intuition for this is that in two-player interactions, players can use payments to effectively assume the role of a Stackelberg leader in the game. 
In particular, in some games the equilibrium outcome to which learning dynamics converge may yield a lower payoff for some player than the Stackelberg outcome for that player, creating an incentive to use payments to induce such higher-payoff outcomes. This can arise even in games that are extremely simple in the no-payments setting, such as dominance-solvable games and games with iterated dominant strategies (see Appendix~\ref{sec:appendix-PD-game}). 
\begin{theorem}\label{thm:two-players-DS-games}
    Any finite two-player game with a unique coarse correlated equilibrium in pure strategies and at least two Stackelberg equilibrium outcomes with different payoffs (depending on which player is the Stackelberg leader) is not stable for any regret-minimizing agents for the players. 
\end{theorem}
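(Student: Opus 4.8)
The plan is to refute stability directly from Definition~\ref{def:stable-games}: I will exhibit a player who strictly profits from a unilateral payment policy, so that the all-zero profile cannot be an equilibrium of $\mathcal{G}$ for large $T$. The reduction that drives everything is to relate the value a player can extract through payments to her \emph{Stackelberg value} in the underlying game. First I would record what the two hypotheses buy us. Since $\Gamma$ has a unique coarse correlated equilibrium $s^*$ and it is pure, the time-averaged play of \emph{any} regret-minimizing agents converges to $s^*$; hence in the zero-payment profile each player $i$ obtains exactly $u_i(s^*)$, independently of the algorithms used. This fixes the benchmark $x=s^*$ against which deviations are measured (and makes $s^*$ the unique pure Nash equilibrium).

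Next I would show that each player $i$ can secure her Stackelberg value through a unilateral payment. Write $\mathrm{BR}_j(\cdot)$ for the follower's best response, let $V_i := \max_{s_i} u_i\big(s_i, \mathrm{BR}_j(s_i)\big)$ be player $i$'s Stackelberg value, and let $O_i$ be the corresponding Stackelberg outcome (with $i$ as leader). Applying Lemma~\ref{thm:lemma-single-player-conditions} with $x=s^*$ and $y=O_i$, inducing $O_i$ is profitable for $i$ precisely when $w(O_i)-w(s^*) > \sum_{j\ne i} R_j(s^*,O_i)$. In the two-player case the right-hand side is $u_j^{BR}(O_i)-u_j(s^*)$, and because the follower plays a best response inside $O_i$ we have $u_j(O_i)=u_j^{BR}(O_i)$; substituting $w=u_i+u_j$, the inequality collapses to
\[
u_i(O_i) - u_i(s^*) > 0, \qquad \text{i.e. } V_i > u_i(s^*).
\]
Thus a player profits from inducing her own Stackelberg outcome exactly when her Stackelberg value strictly exceeds her Nash payoff. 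Since committing to $s_i^*$ guarantees $u_i(s^*)$, we always have $V_i \ge u_i(s^*)$, so both players weakly benefit.

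The crux is to upgrade this to a \emph{strict} gain for at least one player, and here the hypothesis of two distinct Stackelberg outcomes is used. Suppose toward contradiction that $V_1=u_1(s^*)$ and $V_2=u_2(s^*)$. In the generic case (distinct follower best responses and distinct leader commitment values, obtained by the infinitesimal-perturbation convention used elsewhere in the paper), each leader's optimal commitment is unique; since committing to $s_i^*$ already attains $u_i(s^*)=V_i$, that commitment must be the optimal one, whence $O_i=\big(s_i^*,\mathrm{BR}_j(s_i^*)\big)=s^*$ for both $i$. But then the two Stackelberg outcomes coincide with $s^*$ and carry identical payoffs, contradicting the assumption that they differ. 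Hence $V_1>u_1(s^*)$ or $V_2>u_2(s^*)$, and by the computation above the corresponding player has a strictly profitable deviation from the zero-payment policy for all large $T$. Since this holds for every set of regret-minimizing agents (all of which converge to $s^*$), the zero-payment profile is not an equilibrium, so $\Gamma$ is not stable.

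I expect the third step to be the main obstacle: the implication ``$V_i=u_i(s^*)\Rightarrow O_i=s^*$'' can fail when the leader is indifferent among several commitments or the follower has tied best responses, in which case distinct Stackelberg outcomes could coexist with $V_i=u_i(s^*)$. The argument must therefore be carried out under the paper's generic-perturbation convention (or, equivalently, a fixed strong-Stackelberg tie-breaking rule), and I would state that genericity assumption explicitly before running the contradiction. The remaining pieces are routine given the earlier development: the convergence-to-$s^*$ claim is immediate from uniqueness of the CCE, and the implementability of $O_i$ at vanishing net cost is exactly Lemma~\ref{thm:lemma-single-player-cost}, so no new machinery beyond Section~\ref{sec:single-player-manipulations} is needed.
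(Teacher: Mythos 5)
Your proof is correct (modulo the genericity caveat you flag yourself) and rests on the same conceptual move as the paper's proof: a unilateral deviation in which one player uses payments to assume the Stackelberg-leader role, measured against the benchmark that all regret-minimizing dynamics converge to the unique pure CCE $s^*$. The implementation differs in two ways worth noting. First, the paper does not invoke Lemma~\ref{thm:lemma-single-player-conditions} at all: it argues directly that a player can make her own target action dominant via large self-penalty payments, after which the opponent's regret-minimizing agent learns to best respond, so the dynamics converge to the Stackelberg outcome at vanishing net cost. Your route through Lemma~\ref{thm:lemma-single-player-conditions} (with $y=O_i$, using $k(O_i)=R_i(O_i,O_i)$ so the on-path cost vanishes) is legitimate---the lemma applies to arbitrary pure outcomes, not just equilibria---and it buys you the clean characterization that the deviation is profitable exactly when $V_i > u_i(s^*)$. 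Second, the paper simply \emph{asserts} the key step that distinct Stackelberg outcomes imply some player obtains a strictly higher payoff as leader; you actually prove it, by contradiction, under a no-ties convention. Your worry about that step is well-founded and applies equally to the paper's own argument: with leader indifference (e.g., two optimal commitments yielding the leader the same value $u_1(s^*)$ but different follower payoffs), one can build games satisfying a literal reading of the hypothesis in which neither player strictly gains, and the paper's convergence step likewise needs the follower's best response to the committed action to be unique (or favorably tie-broken) for the dynamics to reach the intended outcome. So your explicit genericity assumption is not a weakness relative to the paper---it makes precise a condition the paper's proof uses implicitly.
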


\begin{proof} 
Consider as our starting point the action profile in which the players use zero payments. 
In the game without payments, any regret-minimization dynamic will converge to the unique coarse correlated equilibrium (which is also the unique Nash equilibrium). To show that the game is not stable for any set of regret-minimizing agents for the players, we need a utility-improving deviation from the zero-payment profile for one of the two players. 

We use the fact that when the opponent does not make payments, a player has the ability to alter her policy so as to induce the agent to playing a pure action in the long term; this can be obtained by making large payments whenever the agent plays different actions. Since for at least one of the players there is a Stackelberg-equilibrium outcome that is different from the (unique) Nash-equilibrium outcome, there exists a player who obtains a higher payoff as a Stackelberg leader in the game, so this player prefers to use such a policy. Since the other agent is also minimizing regret, this agent will eventually learn to best reply to the fixed action of the first player's agent, and the dynamics with these payments will converge to the Stackelberg outcome of the game. 
Thus, we have a utility-improving deviation from the zero-payment policy, so the game is not stable. 
Note that this is a different result from that of Theorem \ref{thm:single-player-stability}, since the Stackelberg outcome here is not a coarse correlated equilibrium of the game $\Gamma$. 
\end{proof}

\section{Conclusion}\label{sec:conclusion}
Autonomous learning agents are widely and increasingly used in many online economic interactions, such as auctions and other markets. Rapid advancement in large language models and their integration into a wide array of applications further reduce barriers to communication and interaction among autonomous agents. It is not difficult to envision a future digital landscape populated by a plethora of autonomous agents that will be endowed with access to financial assets or fiat currencies, trading among themselves to serve the interests of their users. Motivated by this transition to automated systems with sophisticated AI agents, we studied the scenario where users let their automated agents offer monetary transfers to other agents during their dynamics. 

Our results show that strategic users quite generally have incentives to use payments with their learning agents, which can have very significant implications for the overall outcomes. 
While the focus in this work is on the general properties of payments between learners and their analysis in the context of auctions, studying payments among learning agents may also be of interest in a broad range of other markets and strategic interactions, including  Fisher market models, asset markets, or principal-agent games with multiple learning agents. 

These incentives for utilizing payments between agents mark a potential risk for market environments. While payments between competing firms in classical markets are constrained by antitrust and competition law, learning agents with financial autonomy might engage in transfers whenever doing so improves their objective. This raises new challenges for defining and preventing undesirable algorithmic behavior, assigning accountability for agents’ actions, and regulating the financial autonomy of learning agents in market settings --- challenges that are inherently tied to the strategic and learning dynamics induced by such transfers. A concrete example in which  payments by autonomous algorithmic agents are already widespread and largely unregulated is Maximal Extractable Value (MEV) in blockchain-based markets (see, e.g., the  ESMA and FCA reviews on that matter~\cite{esma_mev_2025,fca_mev_2024}). In this setting, trading bots generate action-contingent payments to incentivize execution priority for specific currency transactions, enabling a range of arbitrage and price-manipulation strategies, some of which would be considered illegal in traditional markets.

More broadly, our findings highlight a fundamental challenge for mechanism design in the AI era: understanding the interplay between the joint learning dynamics of learning-based agents, the ensuing strategic incentives of their users, and crucially, the desired mandates and regulatory principles that should govern their interaction in automated markets.


\section*{Acknowledgments} 
We thank Noam Nisan for his valuable feedback on this study. 
We also thank anonymous referees for helpful comments on an earlier version of the paper, and for useful feedback on Proposition~\ref{thm:iterated-dominant-strategies}.  
\'Eva Tardos was supported in part by 
AFOSR grant FA9550-23-1-0410, AFOSR grant FA9550-231-0068.    
Joe Halpern was supported in part by 
AFOSR grant FA23862114029, MURI
grant W911NF-19-1-0217, ARO grant
W911NF-22-1-0061, NSF grant FMitF-2319186, and a grant from the
Cooperative AI Foundation. 

\bibliographystyle{abbrvnat} %
\bibliography{auction-with-payments-refs} %

\appendix
\section*{APPENDIX}
\section{Prisoner's Dilemma Games}\label{sec:appendix-PD-game}

We begin by stating a simple result for a class of games we term games with \emph{``iterated dominant strategies.''} This class generalizes games with dominant-strategy equilibria and forms a subclass of dominance-solvable games (see, e.g., \cite{jafari2001no}). In these games, there exists a sequence of iterated elimination of strictly dominated strategies that follows an order over the players, where in each step the player in turn eliminates all but one of their strategies. 
In particular, this class includes the prisoner's dilemma game. 
\begin{proposition}\label{thm:iterated-dominant-strategies}
    In a game with iterated dominant strategies, the empirical distribution of play of any type of regret-minimization dynamic converges to the unique Nash equilibrium.  
\end{proposition}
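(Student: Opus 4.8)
The plan is to route the entire argument through coarse correlated equilibria (CCE), which turns the dynamical question into a static characterization. The key general fact I would invoke is the standard one already used elsewhere in the paper: under any external-regret-minimizing dynamic — the weakest form of regret minimization, and the one implied by every stronger notion (internal/swap regret, etc.) considered here — the empirical distribution of joint play converges to the set of CCE of $\Gamma$ (this is the ``only if'' direction complementing the convergence result cited via \cite{KolumbusN22,monnot2017limits}). Consequently, it suffices to prove a purely static claim: \emph{a game with iterated dominant strategies has a unique CCE, namely the point mass on the pure profile $s^\ast=(s_1^\ast,\dots,s_n^\ast)$ produced by the iterated elimination.} Once this is established, the empirical distribution, being confined to a shrinking neighborhood of a single-point CCE set, must converge to $s^\ast$, which is also the unique Nash equilibrium (a pure, unique CCE is in particular the unique NE, and iterated strict dominance is well known to leave exactly one surviving profile).

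The heart of the proof is the static uniqueness claim, which I would prove by induction on the elimination order; relabel the players so that the order is $1,2,\dots,n$. Let $\sigma$ be any CCE and let $\delta>0$ be a uniform lower bound on all the strict-dominance gaps arising in the elimination. The CCE condition for player $k$ deviating to the surviving action $s_k^\ast$ reads $\sum_s\sigma(s)u_k(s)\ge\sum_s\sigma(s)u_k(s_k^\ast,s_{-k})$. For the base case $k=1$, $s_1^\ast$ strictly dominates every other action of player $1$ in $\Gamma$, so the integrand $u_1(s)-u_1(s_1^\ast,s_{-1})$ is at most $-\delta$ whenever $s_1\ne s_1^\ast$, giving
$$\sum_{s}\sigma(s)\big(u_1(s)-u_1(s_1^\ast,s_{-1})\big)\le-\delta\,\Pr\nolimits_{\sigma}[s_1\ne s_1^\ast].$$
Since the left side must be nonnegative for a CCE, $\Pr_\sigma[s_1\ne s_1^\ast]=0$. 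For the inductive step, assume $\sigma$ is supported on profiles with $s_1=s_1^\ast,\dots,s_{k-1}=s_{k-1}^\ast$. By the defining property of this game class, at step $k$ of the elimination $s_k^\ast$ strictly dominates all of player $k$'s other actions \emph{for every} choice of the remaining players' (unreduced) actions, once players $1,\dots,k-1$ are fixed at $s_1^\ast,\dots,s_{k-1}^\ast$. Hence on the entire support of $\sigma$ the integrand $u_k(s)-u_k(s_k^\ast,s_{-k})$ is again at most $-\delta$ on $\{s_k\ne s_k^\ast\}$, and the identical one-line inequality forces $\Pr_\sigma[s_k\ne s_k^\ast]=0$. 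Completing the induction shows $\sigma=\delta_{s^\ast}$.

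Combining the two pieces finishes the argument: the CCE set is the singleton $\{s^\ast\}$, empirical play converges to it, and $s^\ast$ is the unique Nash equilibrium, so the stated convergence holds for every regret-minimizing dynamic.

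The step I expect to be the main obstacle is the inductive CCE argument, specifically making precise that ``dominant at step $k$ of the elimination'' yields strict dominance \emph{on the support to which $\sigma$ is already confined}. The care is needed because the later players' action sets have not been reduced, so the dominance inequality must be verified against all of their original actions — which is exactly what the definition of iterated elimination in this class guarantees, but which must be stated explicitly to license applying the CCE deviation inequality. I would note in passing that a more hands-on alternative — arguing directly that strictly dominated actions are played with vanishing frequency and then iterating — is also possible, but it forces one to manage \emph{approximate} dominance on the shrinking-frequency support and an interchange of nested limits in $T$, complications that the exact CCE characterization sidesteps entirely.
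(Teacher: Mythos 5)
Your proof is correct and follows essentially the same route as the paper's: both arguments reduce the dynamical claim to showing, by induction along the elimination order, that any coarse correlated equilibrium must place all its mass on the iteratively dominant profile, and then invoke convergence of no-regret play to the CCE set. Your version merely makes explicit the quantitative dominance-gap inequality that the paper compresses into ``otherwise she has positive regret.''
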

\begin{proof}
     Consider the following induction argument. The game has at least one order over the players of iterated elimination of strictly dominated strategies. Index the players according to one such order, such that player $1$ has a dominant strategy, player $2$ has a dominant strategy in the sub-game where player $1$ plays her dominant strategy, and so on. Additionally, index the corresponding dominant strategies of the players using the same order, so that $a_1$ is the action that is not eliminated by player $1$, $a_2$ is not eliminated by player $2$, and so on. Clearly, in any CCE, player $1$ plays her dominant strategy $a_1$; otherwise she has positive regret. Our induction step is simple: Fix any CCE. Given that players $1,...,k-1$ play actions $a_1,...,a_{k-1}$, playing $a_k$ is a strict best response for player $k$. By induction, this is true up to the last player $n$, so we have that our arbitrary CCE is the pure Nash equilibrium profile $(a_1,...,a_n)$. Therefore, regret-minimization dynamics converge to it. 
\end{proof}

\vspace{5pt}
\noindent
\emph{Equilibria of the game from Figure \ref{fig:PD-game-a}:} To approach the outcome described in the introduction as an equilibrium, we need to adjust the payment policy of player $1$ as follows. Player $1$'s agent  pays player $2$’s agent the maximum payment whenever player $1$’s agent cooperates, and pays $\nicefrac{1}{3} + \epsilon$ when the outcome $(D,C)$ is obtained. Effectively, player $1$ blocks player $2$ from incentivizing player $1$’s agent to cooperate, so player $2$’s best response is not to pay anything and get $\nicefrac{1}{3} + \epsilon$. This is an $\epsilon$-equilibrium for all $\epsilon > 0$,  since player $1$ incentivizes outcome $(D,C)$ in the limit $T\rightarrow \infty$, and she cannot improve her long-term payoff by more than $\epsilon$. The other equilibrium, as mentioned, is the mirror image of the same equilibrium (replacing the payment policies of players $1$ and $2$).

\begin{figure}[t!]
\centering
\begin{subfigure}{.49\linewidth}
    \centering
    \begin{NiceTabular}{cccc}[cell-space-limits=3pt]
         &     & \Block{1-2}{{\small Player $2$}} \\
         &     & $C$     & $D$ \\
    \Block{2-1}{{\small Player $1$}} 
         & $C$ & \Block[hvlines]{2-2}{}
               $ \ y, y \ $ & $ \ 0, x \ $ \\
         & $D$ & $ \ x, 0 \ $ & $ \ 1, 1 \ $ 
    \end{NiceTabular}
    \caption{{\small The symmetric game: $x > y > 1$.}}
    \label{fig:PD-game-b}
\end{subfigure}
\begin{subfigure}{.49\linewidth}
    \centering
    \begin{NiceTabular}{cccc}[cell-space-limits=3pt]
         &     & \Block{1-2}{{\small Player $2$}} \\
         &     & $C$     & $D$ \\
    \Block{2-1}{{\small Player $1$}} 
         & $C$ & \Block[hvlines]{2-2}{}
               $ \ y_1, y_2 \ $ & $ \ 0, x_2 \ $ \\
         & $D$ & $ \ x_1, 0 \ $ & $ \ 1, 1 \ $ 
    \end{NiceTabular}

\caption{{\small The asymmetric game: $x_i > y_i > 1$.}}
\label{fig:PD-game-asymmetric}
\end{subfigure}
\caption{Parametrized Prisoner's Dilemma games.}
\end{figure}

\vspace{8pt}
\noindent
\emph{Price of Stability and Price of Anarchy:} Consider the symmetric game in Figure \ref{fig:PD-game-b}, where in the prisoner's dilemma we have $x > y > 1$. The equilibria of the game depend of the welfare gap between the game outcomes. There  are two cases. If $x \leq 2$, then the price of anarchy is trivially bounded by $2$, as this is the maximum utility gap in the game. If $x > 2$, the payment game has two equilibria in which one agent pays the maximum amount when cooperating and pays $1 + \epsilon$ when the other agent cooperates, and the other agent does not make any payments. Each of these equilibria yields a welfare of $x$. Since $x > y$, we have $\nicefrac{2y}{x}  < 2$, so the price of anarchy is at most $2$.

In the asymmetric game, we assume without loss of generality that $x_1 \geq x_2$. We have three cases to analyze. If $x_1,x_2 \leq 2$, then the players do not have profitable payment policies, so, as before, $PoA=PoS$ and both are trivially bounded by $2$. 
If $x_1 > 2$ and $x_2 \leq 2$, then only player $1$ has a profitable payment policy and the payment game has a single equilibrium with the outcome $(D,C)$, which yields a welfare of $x_1$ Since $x_1 > y_1, y_2$, we have $\frac{y_1+ y_2}{x_1} < 2$, $PoA = PoS$, and both bounded by $2$. Finally, if $x_1, x_2 > 2$, as in the symmetric game, we have two equilibria for the payment game with outcomes $(D,C)$ or $(C,D)$. The $PoS$ in this case is at most $2$ by the same argument as in the previous case. However, the ratio $x_1/x_2$ between the two equilibria can be unbounded, so the $PoA$ can be unbounded.

\section{Deferred Proofs for Theorem \ref{thm:first-price-cooperative-outcomes}}\label{sec:appnedix-Thm2-proof}
\begin{proof} (Lemma \ref{thm:lemma-first-price-NE-with-eps-payment}). 
We start by assuming that in the agents' game, parameterized by $\eta$, there is a mixed Nash equilibrium in which each player's strategy has a possible atom at $0$ and is otherwise absolutely continuous on $(0, v_2-\eta]$ with a strictly positive density. Equivalently, the cumulative distribution function is continuous on $(0, v_2-\eta]$ and may have a jump at $0$.  
The validity of this ansatz will be verified shortly. 

Consider player $1$'s utility. In equilibrium, she is indifferent between bidding zero and bidding $v_2 - \eta$, giving us the size of the point mass for player $2$ at zero:
$$
v_1 - (v_2 - \eta) - G_2(0)\eta = G_2(0)(v_1 - \eta)
\quad \Rightarrow \quad
G_2(0) = 1 - \frac{v_2 - \eta}{v_1}\jhedit{.}
$$
Player $1$'s utility when bidding $x$ is $u_1(x,y) = G_2(x)(v_1-x) - \chi(y)\eta$, where $\chi(y)$ is an indicator which equals $1$ if player $2$ bids zero (i.e., if $y=0$) and equals zero otherwise. Importantly, it is independent of $x$.
Player $1$ is indifferent between all her bids, so $\frac{\partial u_1(x,y)}{\partial x} = 0$, which gives  
$$
G'_2(x)(v_1 - x) - G_2(x) = 0. 
$$
We have an ordinary differential equation with a unique solution of the form 
$
G(x) = \frac{c}{v_1 - x}.
$
Given our initial assumption, we must have continuity at zero, and thus we have 
$
c = v_2 G_2(0) = v_1 - v_2 + \eta
$.
Verifying the consistency of the support (the CDF must be equal to $1$ at the top of the support):
$$
G_2(v_2 - \eta) = \frac{v_1 - v_2 + \eta}{v_1 - (v_2 - \eta)} = 1. 
$$

Since player $2$ is indifferent between bidding $x>0$ and bidding zero for a utility of $\eta$, we have 
$$\eta = F_1(x)(v_2 - x)
\quad \Rightarrow \quad
F_1(x) = \frac{\eta}{v_2 - x}.
$$
Thus, we have a mixed Nash equilibrium with CDFs
$$
F_1(x) = \frac{\eta}{v_2 - x},
\quad  
G_2(x) = \frac{v_1 - v_2 + \eta}{v_1 - x},
$$ as stated in the lemma.
\end{proof}

\begin{proof} (Lemma \ref{thm:lemma-first-price-NE-utilities}).
We start by computing the expected payoff of player $1$, which is given by 
    $$
    \E[u_1] = \int_0^{v_2-\eta} 
    \Big( 
    G_2(x)(v_1 - x) - \eta G_2(0)
    \Big)
    f(x) dx,
    $$
    where the PDF is $f(x) = \frac{\eta}{(v_2 - x)^2}$. From the distributions we calculated, we get that
    $$
    \E[u_1] = 
    v_1 - v_2 + \eta 
    \Big(\frac{v_2 - \eta}{v_1}\Big).
    $$
    A first-order condition with respect to $\eta$ shows that the expectation is maximized when $\eta = v_2/2$, in which case its value is $v_1 - v_2 + v_2^2/4v_1$, as claimed.  With this choice of $\eta$, player 2's utility is $v_2/2$.

    The frequency at which the low agent (agent $2$) wins is given by:
    $$
    Pr[b_2 > b_1] = 
    \int_0^{v_2-\eta}
    F(x)
    g(x) dx = 
    \frac{v_2(2v_1 - v_2)}{4(v_1 - v_2)^2}
    \Bigg(
    \ln\Big(\frac{2v_1 - v_2}{v_1}\Big) - 
    \frac{v_2(v_1 - v_2)}{v_1(2v_1 - v_2)}
     \Bigg).
    $$
    If $v_2$ is small compared to $v_1$ (in the limit $v_2/v_1 \rightarrow 0$), agent $2$ never wins. If $v_2$ is close to $v_1$ (in the limit $v_2/v_1 \rightarrow 1$), agent $2$ will win $3/8$ of the time. 
\end{proof}

\begin{proof} (Lemma \ref{thm:lemma-first-price-mean-based-NE-CDFs}) We begin by showing that the Nash equilibrium cumulative distributions bound the marginal distributions in our dynamics from below (i.e., the latter stochastically dominate the Nash equilibrium).  
    \begin{claim}
        The support of $F_1$ and $G_2$ is at most $v_2 - \eta$, and for all $b$ in the support, we have that $F_1(b) \geq  F_1^{NE}$, and $G_2(b) \geq  G_2^{NE}(b)$.
    \end{claim}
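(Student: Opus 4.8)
The plan is to prove the two halves of the claim—the support bound and the stochastic-domination lower bounds—separately, in that order, using in both cases the defining property of mean-based learners: in the limit $T\to\infty$, a mean-based agent places vanishing frequency on any bid whose time-averaged payoff is bounded away from that of the best fixed bid in hindsight, so its limiting marginal is supported on (approximate) best responses to the opponent's empirical distribution.

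For the support bound I would argue in two steps. First, for agent $2$: bidding any $b>v_2-\eta$ yields auction payoff at most $v_2-b<\eta$ and no payment (since $b\neq 0$), whereas the fixed deviation ``always bid $0$'' yields exactly $\eta$ (agent $2$ never wins at $0$ but collects the payment each round). The payoff gap $\eta-(v_2-b)=b-(v_2-\eta)$ is bounded away from zero on $[v_2-\eta+\delta,\,v_2]$ for each $\delta>0$, so the mean-based property forces agent $2$'s limiting support into $[0,v_2-\eta]$, and this step uses nothing about $F_1$. Second, for agent $1$: once $\mathrm{supp}(G_2)\subseteq[0,v_2-\eta]$, any bid $b\ge v_2-\eta$ beats all of agent $2$'s bids (ties break to agent $1$), so it has win probability $1$ and auction payoff $v_1-b$, strictly decreasing in $b$; since the payment term $\eta\Pr[b_2=0]$ is constant in agent $1$'s own bid, every $b>v_2-\eta$ is strictly dominated by $v_2-\eta$ and again gets vanishing frequency. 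Hence both supports lie in $[0,v_2-\eta]$.

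For the lower bounds I would compute, for each agent, an explicit lower bound on the value of a best response to the opponent's empirical distribution, and then invoke the mean-based property to conclude that every bid played with non-vanishing frequency attains, up to $o(1)$, at least that value. Against $F_1$, agent $2$ can guarantee $\eta$ by bidding $0$, so its best-response value is at least $\eta$; since a bid $b\in(0,v_2-\eta]$ yields $F_1(b)(v_2-b)$, every such $b$ in agent $2$'s limiting support satisfies $F_1(b)(v_2-b)\ge\eta$, i.e. $F_1(b)\ge\eta/(v_2-b)=F_1^{NE}(b)$. Symmetrically, against $G_2$ agent $1$ can guarantee, net of the constant payment term, the value $v_1-v_2+\eta$ by bidding $v_2-\eta$, where $G_2(v_2-\eta)=1$; since bidding $b$ yields $G_2(b)(v_1-b)$ up to that same constant, every $b$ in agent $1$'s limiting support satisfies $G_2(b)(v_1-b)\ge v_1-v_2+\eta$, i.e. $G_2(b)\ge(v_1-v_2+\eta)/(v_1-b)=G_2^{NE}(b)$. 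Taking $T\to\infty$ removes the $o(1)$ slack and yields the stated pointwise inequalities.

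The step I expect to be the main obstacle is the bookkeeping of the domain on which each inequality is asserted. The best-response argument naturally certifies $F_1\ge F_1^{NE}$ at points of $\mathrm{supp}(G_2)$ and $G_2\ge G_2^{NE}$ at points of $\mathrm{supp}(F_1)$, whereas the claim asks for the bounds at all $b$ in the support. I would close this gap by first showing that both limiting supports are the full interval $[0,v_2-\eta]$—there can be no interior gap, since on any maximal subinterval carrying no mass of one player the other player's best-response value would be strictly monotone and could not be matched at both endpoints—and then extending the pointwise bounds to all of $[0,v_2-\eta]$ using monotonicity and right-continuity of the dynamics' CDFs together with continuity of $F_1^{NE}$ and $G_2^{NE}$. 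Carefully handling the atom of $G_2$ at $0$ inherited from the payment, and making the passage from the $o(1)$ estimates to exact limiting inequalities uniform, are the remaining technical points.
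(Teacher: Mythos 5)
Your proof takes essentially the same route as the paper's: the support bound comes from strict dominance of the zero bid for agent $2$ and of $v_2-\eta$ for agent $1$ (given agent $2$'s support), and the CDF inequalities come from the mean-based property forcing every supported bid to earn at least the best-response benchmark, i.e.\ $F_1(b)(v_2-b)\ge \eta$ and $G_2(b)(v_1-b)-\eta G_2(0)\ge v_1-v_2+\eta-\eta G_2(0)$. The domain bookkeeping you flag as the main obstacle is not addressed in the paper at all --- its proof certifies each inequality only on the respective bidder's own support and leaves the cross-over issue implicit --- so your core argument already matches the paper's, and the no-gap extension is extra care beyond what the paper provides.
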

    \begin{proof}
        For agent $2$, all bids above $v_2-\eta$ are strictly dominated by bidding zero, and so are not in the support. 
        For agent $1$, given the support of agent $2$, bids above $v_2 - \eta$ give strictly less utility than bidding $v_2 - \eta$, so a regret-minimizing agent will not play them.

        For mean-based agents, we have that for all $b$ in the support,   
        $u_2(b) = F_1(b)(v_2 - b) \geq u_2(0) = \eta$ and 
        $u_1(b) = G_2(b)(v_1 - b) - \eta G_2(0) \geq 
        v_1 - v_2 + \eta - \eta G_2(0)$. 
        Hence we have $F_1(b) \geq \frac{\eta}{v_2 - b} = F_1^{NE}$, and $G_2(b) \geq \frac{v_1 - v_2 + \eta}{v_1 - b} = G_2^{NE}(b)$.
    \end{proof}

    Continuing the proof of Lemma \ref{thm:lemma-first-price-mean-based-NE-CDFs}, suppose, for contradiction, $F_1(b') > F_1^{NE}(b')$ for some $b'< v_2-\eta$. We have that $F_1(b') > F_1^{NE}(b')$ implies $u_2(b') = F_1(b')(v_2-b') > F_1^{NE}(b')(v_2-b') = \eta = u_2(0)$. Therefore, bidding zero is not within the support of player $2$.
        
    However, if player $2$ never bids zero, then there are no payments in our CCE. Since the agents minimize regret, it must be a CCE of the standard first-price auction.

    In the standard first-price auction (without payments), mean-based agents can converge only to CCEs that include bids above $v_2 - \eta$ in their support; this follows from  \cite{kolumbus2022auctions}. This is a contradiction to our assumption. Therefore, we have $F_1 = F_1^{NE}$. 
    
    Next, suppose for contradiction that $G_2(b') > G_2^{NE}(b')$ for some $b' < v_2-\eta$. 
    This implies $u_1(b') = G_2(b')(v_1-b') - \eta G_2(0) > G_2^{NE}(b')(v_1-b') - \eta G_2(0) =  v_1 - v_2 + \eta -  \eta G_2(0)  = u_1(v_2 - \eta)$. That is, bidding $b'$ gives player 1 a higher utility than bidding $v_2 - \eta$, so the latter is not within the support of player 1, which is a contradiction, since the marginal distribution of player $1$ is $F_1$, which is supported on $[0, v_2-\eta]$.
    Thus, overall, we have $F_1 = F_1^{NE}$ and $G_2 = G_2^{NE}$.  
\end{proof}

\begin{proof} (Theorem \ref{thm:first-price-cooperative-outcomes}.)   
To prove the theorem, we use the fact that the game analyzed in the above  lemmas is induced by a unilateral payment policy by the high-value player.
In any equilibrium, player $1$ can consider the deviation to a policy in which she effectively rejects (pays back) all the payments she received and makes this unilateral payment, leading to a utility of $v_1 - v_2 + v_2^2/4v_1$. Therefore, in any equilibrium, player $1$ gets at least this utility, which is higher than the utility of $v_1 - v_2$ that she gets in the game without payments. As for the utility of the low-value player, since the high-value player's utility is above the second price, player $1$ must use positive payments, otherwise the dynamics of mean-based agents could approach only the second-price outcome (as shown in \cite{kolumbus2022auctions}). Therefore, in the equilibrium of the payment-policy game, the low-value player has positive utility, an improvement over the game without payments. 
Having established this, the fact that the revenue decreases in any equilibrium is now straightforward.  To see this explicitly, suppose for contradiction that the revenue $R$ is at least the revenue $v_2$ of the game without payments. The total welfare (of the players and the auctioneer) is $ w = u_1 + R + u_2 \leq v_1$. We then get  $v_1 + \eta\Big(\frac{v_2 - \eta}{v_1}\Big) + u_2 \leq v_1$, which is a contradiction, since player $2$ cannot get negative utility in equilibrium. 

By Lemmas~\ref{thm:lemma-first-price-NE-utilities}, \ref{thm:lemma-first-price-mean-based-NE-CDFs}, player $1$ has a utility-improving deviation from the zero-payments profile for any mean-based regret-minimizing agents for the players, so the game is not stable for such agents; also, both players improve their utilities, and thus the revenue for the auctioneer decreases.
\end{proof}

\section{Extension of Lemmas for Theorem \ref{thm:first-price-cooperative-outcomes}}\label{sec:appendix-FP-auction-n-player-NE}
To extend the lemmas of Theorem \ref{thm:first-price-cooperative-outcomes} to the case of $n$ players, consider the following analysis. Suppose we have $n$ players with values $v_1>v_2>v_3 > \dots > v_n$, each using a regret-minimizing agent. If agent $2$ bids $v_3$, agent $1$ pays her $\eta > 0$.  
Essentially, this reduces to the same proof after shifting the support of the bids from $0$ to $v_3$. 
Lemma \ref{thm:lemma-first-price-mean-based-NE-CDFs} and \ref{thm:lemma-first-price-NE-utilities} have the following simple adjustments.

As in the proof of Lemma  \ref{thm:lemma-first-price-NE-with-eps-payment}, we begin by assuming that in the game between these agents, parameterized by $\eta$, there is a mixed Nash equilibrium in which each player's strategy has a possible atom at $v_3$ and is otherwise absolutely continuous on $(v_3, v_2-\eta]$ with a strictly positive density. Equivalently, the cumulative distribution function is continuous on $(v_3, v_2-\eta]$ and may have a jump at $v_3$. 

We start by looking at player $1$'s utility. In equilibrium, she is indifferent between bidding $v_3$ and bidding $v_2 - \eta$, giving us the (point-mass) probability that player $2$ bids $v_3$:
$$
v_1 - (v_2 - \eta) - G_2(v_3)\eta = G_2(v_3)(v_1 - v_3 - \eta)
\quad \Rightarrow \quad
G_2(v_3) = \frac{v_1 - v_2 + \eta}{v_1 - v_3}.
$$
Player $1$'s utility when bidding $x$ is $u_1(x,y) = G_2(x)(v_1-x) - \chi(y)\eta$, where $\chi(y)$ is an indicator which equals $1$ if player $2$ bids $v_3$ and equals zero otherwise. As before, note that it is independent of $x$.
Player $1$ is indifferent between all her bids, so $\frac{\partial u_1(x,y)}{\partial x} = 0$, which gives  
$
G'_2(x)(v_1 - x) - G_2(x) = 0
$. 
We have an ordinary differential equation with a unique solution of the form
$
G(x) = \frac{c}{v_1 - x}
$. 
From continuity above $v_3$, we get 
$
c = (v_1 - v_3) G_2(v_3) = v_1 - v_2 + \eta
$. 
Checking the consistency of the support (the CDF must be equal to $1$ at the top of the support):
$$
G_2(v_2 - \eta) = \frac{v_1 - v_2 + \eta}{v_1 - (v_2 - \eta)} = 1. 
$$

Player $2$ is indifferent between bidding $x>v_3$ and bidding $v_3$ and getting a utility of $\eta$, so we have $
\eta = F_1(x)(v_2 - x)
\quad \Rightarrow \quad
F_1(x) = \frac{\eta}{v_2 - x}
$. 
Thus, we have the CDFs
$$
F_1(x) = \frac{\eta}{v_2 - x},
\quad \quad 
G_2(x) = \frac{v_1 - v_2 + \eta}{v_1 - x}.
$$

For players $i > 2$, since there is a positive probability that both players $1$ and $2$ bid $v_3$ at the same time, any bid above $v_3$ wins the auction with positive probability and gives negative expected utility, whereas any bid $b\leq v_3$ gives a utility of zero. Therefore, any bid distribution for players $i>2$ that has zero weight on bids above $v_3$ forms a Nash equilibrium, together with the distributions $F_1,G_2$ for the first two players.  
The expected payoff of player $1$ is given by 
$
\E[u_1] = \int_{v_3}^{v_2-\eta} 
\Big( 
G_2(x)(v_1 - x) - \eta G_2(v_3)
\Big)
f(x) dx
$,
where the PDF is $f(x) = \frac{\eta}{(v_2 - x)^2}$. Thus, we have 
$$
\E[u_1] =
\int_0^{v_2-\eta}
\Big(
v_1 - v_2 + \eta - \eta
\big( 
\frac{v_1 - v_2 + \eta}{v_1 - v_3}
\big)
\Big)f(x) dx 
= 
v_1 - v_2 + \eta 
\Big(
\frac{v_2 - v_3 - \eta}{v_1 - v_3}
\Big).
$$
So we see that in this equilibrium, player $1$ manages to increase her payoff with a unilateral payment. The optimal payment is $\eta^* = (v_2-v_3)/2$.
The payoff for player $2$ is  $\eta$.  

Given the analysis above of Lemmas \ref{thm:lemma-first-price-NE-with-eps-payment} and \ref{thm:lemma-first-price-NE-utilities} with $n$ agents, the extension of Lemma \ref{thm:lemma-first-price-mean-based-NE-CDFs} holds without any change to the proof, noticing the fact that since the minimum of the support is $v_3$, any player $j>2$ never wins the auction, so these agents can have arbitrary distributions with support at most $v_3$ with mean-based dynamics. Since this is a utility-improving deviation for player $1$ from the zero-utility payment profile, the auction is not stable with $n$ players. 

\section{Additional Definitions}\label{sec:appendix-definitions}

For completeness, we provide here several standard definitions.

\vspace{5pt}
\noindent
\emph{Regret minimization:} For a given sequence of action profiles $s^1,\dots,s^T$, the \emph{regret} of agent $i$ is the difference in utility for $i$ between the actual utility in that sequence and the utility of the best fixed action in hindsight:
$
\text{Regret}_i(s^1,\dots,s^T) = \max_{s \in S_i} \sum_{\tau=1}^T u_i(s, s_{-i}^\tau)  - \sum_{\tau=1}^T u_i(s_i^\tau,s_{-i}^\tau).
$ 
A no-regret agent minimizes this quantity in the long term. There are several formulations for this property, including high probability definitions; we give the following definition for completeness.
\begin{definition}
    An algorithm satisfies the (external) regret-minimization property if, for a time horizon parameter $T$ and any $T$-sequence of play of the other players $(s_{-i}^1, \dots,s_{-i}^T)$, where $s_i^t$ and $s_{-i}^t$ denote the actions taken at time $t$ by the algorithm and by the other players, respectively, we have that in expectation over the actions taken by the algorithm, 
    $
    \max_{s \in S_i}  \sum_{\tau=1}^T u_i(s, s_{-i}^\tau)  - \sum_{\tau=1}^T u_i(s_i^\tau,s_{-i}^\tau) = o(T)
    $
    as $T \rightarrow \infty$. An agent is \emph{regret-minimizing} if it satisfies the regret-minimization property.
\end{definition}

\vspace{5pt}
\noindent
\emph{Coarse correlated equilibria:} Coarse correlated equilibria are a weaker notion than correlated equilibria \cite{aumann1974subjectivity}, also known as the \emph{Hannan set} or \emph{Hannan consistent} distributions 
 \cite{hannan1957lapproximation}. See also \cite{roughgarden2015intrinsic}. The simplest definition for our purpose is the following.
\begin{definition}
    A joint distribution of play is a coarse correlated equilibrium if under this distribution all players have in expectation regret at most zero.
\end{definition}

\vspace{5pt}
\noindent
\emph{Mean-based learning algorithms:} \emph{Mean-based learning algorithms} \cite{braverman2018selling} are a family of algorithms that play with high probability actions that are best responses to the history of play. This class was shown to include many standard no-regret algorithms, like \emph{multiplicative weights} (see \cite{arora2012multiplicative} and references therein), \emph{follow the perturbed leader} \cite{hannan1957lapproximation,kalai2005efficient}, and \emph{EXP3} \cite{auer2002nonstochastic}. 
\begin{definition} (From \cite{braverman2018selling}). 
Let {\small$\sigma_{i,t} = \sum_{t'=1}^t u_{i,t'}$}, where {\small$u_{i,t}$} is the utility of action $i$ at time $t$.
An algorithm for the experts problem or the multi-armed bandits problem is $\gamma(T)$-mean-based if it is the case that whenever
{\small$\sigma_{i,t} < \sigma_{j,t} - \gamma(T) T$}, then the probability that the algorithm pulls arm $i$ in round $t$ is at most $\gamma(T)$. An algorithm is mean-based if it is $\gamma(T)$-mean-based for some $\gamma(T) = o(1)$.
\end{definition}


\end{document}